\definecolor{green(munsell)}{rgb}{0.0, 0.7, 0.47}
\newcommand{\suchetana}[1]{{\leavevmode\color{green(munsell)}#1}}
\newcommand{\R}{\mathcal{R}}
\newcommand{\X}{\mathcal{X}}
\newcommand{\Y}{\mathcal{Y}}
\newcommand{\A}{\mathcal{A}}
\newcommand{\tr}{\text{Tr}}
\newcommand{\V}{\mathcal{V}}
\newcommand{\Pp}{\mathcal{P}}
\newcommand{\bellp}{\ket{\Phi^{+}}}
\newcommand{\mode}{\mathtt{mode}}
\newcommand{\Eval}{\mathtt{Eval}}
\newcommand{\Mes}{\mathtt{Measure}}
\newcommand{\Enc}{\mathtt{Enc}}
\newcommand{\Tr}{\text{Tr}}
\newcommand{\Prb}{\text{Pr}}
\newtheorem{theorem}{Theorem}
\newtheorem{lemma}{Lemma}
\newtheorem{definition}{Definition}
\newcounter{protocol}
\newenvironment{protocol}[1][htb]{%
  \let\c@algorithm\c@protocol
  \renewcommand{\ALG@name}{Protocol}% Update algorithm name
  \begin{algorithm}[H]{%
  }}{\end{algorithm}
}
\begin{document}

\title{Hybrid Authentication Protocols for Advanced Quantum Networks}

\author{Suchetana Goswami}
\thanks{suchetana.goswami@gmail.com \\ These authors contributed equally to this work.}

%\email{These authors contributed equally to this work.}
\author{Mina Doosti}
\thanks{suchetana.goswami@gmail.com \\ These authors contributed equally to this work.}
%\email{suchetana.goswami@gmail.com}
%\affiliation{School of Informatics, University of Edinburgh, 10 Crichton Street, Edinburgh EH8 9AB, United Kingdom}
%\email{These authors contributed equally to this work.}

% \email{}
% \thanks{These authors contributed equally to this work.}
\affiliation{School of Informatics, University of Edinburgh, 10 Crichton Street, Edinburgh EH8 9AB, United Kingdom}

\author{Elham Kashefi}
\affiliation{School of Informatics, University of Edinburgh, 10 Crichton Street, Edinburgh EH8 9AB, United Kingdom}
\affiliation{LIP6, CNRS, Sorbonne Universite, 4 Place Jussieu, Paris 75005, France}

\begin{abstract}
    Authentication is a fundamental building block of secure quantum networks, essential for quantum cryptographic protocols and often debated as a key limitation of quantum key distribution (QKD) in security standards. Most quantum-safe authentication schemes rely on small pre-shared keys or post-quantum computational assumptions. In this work, we introduce a new authentication approach that combines hardware assumptions, particularly Physical Unclonable Functions (PUFs), along with fundamental quantum properties of non-local states, such as local indistinguishability, to achieve provable security in entanglement-based protocols. We propose two protocols for different scenarios in entanglement-enabled quantum networks. The first protocol, referred to as the offline protocol, requires pre-distributed entangled states but no quantum communication during the authentication phase. It enables a server to authenticate clients at any time with only minimal classical communication. The second, an online protocol, requires quantum communication but only necessitates entangled state generation on the \textit{Prover}’s side. For this, we introduce a novel hardware module, the Hybrid Entangled PUF (HEPUF). Both protocols use weakly secure, off-the-shelf classical PUFs as their hardware module, yet we prove that quantum properties such as local indistinguishability enable exponential security for authentication, even in a single round. We provide full security analysis for both protocols and establish them as the first entanglement-based extension of hardware-based quantum authentication. These protocols are suitable for implementation across various platforms, particularly photonics-based ones, and offer a practical and flexible solution to the long-standing challenge of authentication in quantum communication networks.
\end{abstract}

\maketitle

\section{Introduction}
Recent advancements in quantum computation have been fundamentally shaped by the principles of secure quantum communication between spatially separated parties. In many cases, security arises from the inherent quantumness of the system, while classical correlations fail to provide the same level of protection~\cite{Terhal01, Eggeling02, bassoli_21, Lami21, broadbent_16, cacciapuoti19, caleffi18, dynes19, wehner18}. Perhaps the most well-known example is the information-theoretic security of quantum key distribution (QKD) protocols~\cite{bennett84, ekert91, bennett14, tomamichel17}, which rely solely on quantum mechanics as an accurate model of nature at certain scales. This relaxation of assumptions, basing security purely on the laws of physics, has opened new avenues for securing communication in a quantum world, one where protection against adversaries equipped with quantum computers and algorithms becomes crucial.\\

However, in practice, many quantum protocols, including QKD, impose additional requirements that necessitate extra effort and, in most cases, further assumptions. One of the most essential requirements in both classical and quantum communication is authentication: the ability to verify that a message or entity originates from a legitimate source. Authentication has frequently been criticised as one of the major limitations of QKD, with national agencies and governments highlighting it as a key reason for caution when considering quantum communication as the foundation for next-generation cryptographic systems~\cite{nsaquantum,ncsc}. Nonetheless, as argued in~\cite{renner2023debate}, authentication is a universal challenge that extends beyond quantum protocols, always requiring either a pre-shared secret, a trusted third party, or a computational assumption, whether in classical or quantum networks. Thus, practical and reliable authentication mechanisms are vital for the future of secure communication as a whole. Since authentication inherently relies on some form of assumption \cite{Ling25}, this opens the door to designing authentication schemes based on alternative assumptions, such as hardware or physical assumptions~\cite{arapinis21,doosti21,chakraborty23}. An important consideration when integrating these schemes with quantum communication is ensuring that the protocols retain the strong security guarantees and provability offered by quantum cryptographic techniques as much as possible. This requirement often makes the security proof particularly challenging.\\

In hardware-based authentication protocols, the security is partially provided by exploiting the underlying physical devices and their hardware assumptions. A Physically Unclonable Function (PUF) \cite{gassend02_1, gassend02_2, lee04} is a physical device that satisfies certain desirable security properties such as unpredictability or unforgeability, by hardware assumptions, due to its intrinsic physical randomness established during the manufacturing process~\cite{ruhrmair11}. It is also assumed that the creation of an identical clone of a given PUF is fundamentally infeasible, as even the slightest variation in the manufacturing process results in a distinctly different PUF. Due to these properties, PUFs offer a range of applicability in this domain, especially as a unique device fingerprint~\cite{ruhrmair11, guajardo07, kim18}. A PUF can be interacted with physically, formalised often in a query setting, and produce different outcomes. The set of these inputs and outputs is referred to as \textit{challenge-response pairs} (CRPs). While the literature about the CPUF is quite well-explored \cite{gassend02_1, guajardo07, kim18, roel12}, it is shown to be vulnerable against machine learning modelling-based attacks \cite{becker15_1, becker15_2, delvaux19, ruhrmair10, ruhrmair13}.\\

These vulnerabilities have motivated the study of this concept in the quantum world, where in~\cite{arapinis21} it has been shown unlike the classical setting, provably secure Quantum Physical Unclonable Function (QPUF) can exist. Several protocols have also been designed based on QPUF or a combination of CPUF together with quantum communication~\cite{kumar21, doosti21, galetsky22, chakraborty23, ghosh2024existential, davidson2024airqkd, smith2023fast, khan2023soteria}. In particular,~\cite{chakraborty23} introduced an authentication protocol that leverages quantum communication to achieve a provable exponential security advantage, even when using an underlying weak classical PUF. This hybrid design is advantageous as it enhances both the practicality of the protocol and its compatibility with existing quantum communication infrastructure. However, the proposed protocol follows a prepare-and-send network model, leaving open the intriguing question of \emph{whether an entanglement-based variant of such an authentication protocol can exist and what advantages it might offer}. This question is compelling both from a foundational perspective and a practical one. From the foundational point of view, by deepening our understanding of the quantum resources required for secure communication, and from a practical standpoint, particularly in advanced quantum networks where nodes are enhanced with distributed entanglement. Although prepare-and-send quantum network protocols are often less resource-intensive than their entanglement-based counterparts, they are not always the preferred choice for implementation across all platforms. Notably, many demonstrations of QKD protocols in photonic settings rely on entangled states, which offer enhanced resilience against environmental noise~\cite{Chapman20, Chapman22, yu25, zhang24, cai17, Kogias17, Zhou18, Keet10, tagliavacche24}.\\

In this work, we propose the first hybrid hardware-based authentication protocols that use entangled states as their main quantum resource. The term hybrid~\footnote{Hybrid protocols are sometimes used to refer to schemes that combine post-quantum assumptions with quantum communication, or even in other contexts, which is not the intended meaning here.} in this context refers to the combination of classical hardware components and assumptions with quantum communication. More precisely, our protocols implement the identification functionality, where the objective is to verify that communication is occurring with an honest, authenticated party. However, we use the term authentication interchangeably, as in our network setting, a network link can be authenticated in this manner. One of our main motivations for designing authentication protocols based on entangled states is the distinguishability properties of their subsystems.\\

Quantum state discrimination plays a crucial role in various cryptographic protocols~\cite{Terhal01, Eggeling02, Lami21, Bandyopadhyay21, Markham08, Rahaman15}. It involves identifying an unknown quantum state from a known set. If a system, distributed among multiple spatially separated parties, is prepared from a set of pure orthogonal states, a global measurement, i.e., a joint entangling measurement in the right basis, can perfectly identify it. However, when parties are limited to local operations and classical communication (LOCC), the same states may become indistinguishable~\cite{Bennett00, Ghosh01, Walgate02, Ghosh02, Horodecki03, Fan04, Ghosh04, Watrous05, Hayashi06, Bandyopadhyay11, Yu12, Halder18, Halder19, Goswami23}. Such a set of states is called Locally Indistinguishable (LI) and has applications in secret sharing and data hiding~\cite{Terhal01, Eggeling02, Lami21, Markham08, Rahaman15, Goswami23}. A set of two pure orthogonal states is always distinguishable via LOCC~\cite{Walgate00}, provided classical communication is allowed. However, in the cryptographic world we are interested in, spatially separated parties are ideally confined to their local labs, naturally restricting global measurements. Our protocols exploit the properties of local indistinguishability as a key ingredient, ensuring security from the limits of local state discrimination. We show that using these properties simplifies both completeness and security proofs, making them more compact in some cases.

We propose two authentication protocols leveraging entangled states shared between \textit{Verifier} ($\V$) and a \textit{Prover} ($\Pp$). As is common in PUF-based protocols, the \textit{Verifier} often has access to a database of challenge-response pairs (CRPs) derived from a PUF, while the \textit{Prover} possesses the physical device needed to prove their authenticity. During the authentication phase, the \textit{Verifier} sends a classical challenge to which the \textit{Prover} must respond correctly, following the protocol’s rules. The \textit{Verifier} then verifies the expected response. The security of such protocols intuitively ensures that no adversary (classical or quantum) can pass these verification tests without access to the actual device, even if they fully control the communication channel. This property is formally captured by the cryptographic notion of unforgeability~\cite{doosti2021unified,alagic2020quantum,doosti2022unclonability}. In our case, the adversary is capable of arbitrary quantum attacks, meaning the protocol must maintain security even against powerful quantum network adversaries.\\

Our first protocol extends PUF-based hybrid authentication into an entanglement-based setting, in the most straightforward way. Here, a trusted source pre-distributes entangled states before the protocol begins. Once authentication starts, it relies solely on classical communication, referred to thus as \emph{offline protocol}. This design is well-suited for quantum network architectures where nodes share entangled states, allowing authentication to occur at any time without additional quantum communication. The protocol has potential applications, such as serving as a ping test in advanced quantum networks without consuming extra resources. We first prove security for the ideal version of the protocol and then extend our analysis to scenarios where an adversary can tamper with the distributed entanglement.\\

The second protocol is conceptually different and arguably more intriguing from a quantum information perspective, as it exploits local indistinguishability more effectively as a core security feature. Unlike the first protocol, it does not require pre-distributed entanglement from a trusted party. Instead, entangled states are generated dynamically during the protocol using our proposed Hybrid Entangled PUF (HEPUF) construction, which encodes responses into bipartite entangled states. A subsystem is then transmitted over the channel as part of the authentication process, leading us to call this \emph{the online protocol}. Through a detailed security analysis, we prove that this protocol is secure against adaptive quantum adversaries who are unbounded over the channel and polynomially bounded in the setup phase.\\

Our protocols provide both provable security and practical authentication solutions for current and future quantum networks, offering implementation advantages such as enhanced robustness across specific platforms and encoding schemes. Beyond their direct applications, the novel combination of quantum resources, particularly local indistinguishability, opens new research directions in quantum communication and protocol design, creating new possibilities for designing more advanced resource-efficient quantum cryptographic schemes.

%%%%%%%%%%%%%%%%%%%%%%%%%%%%%%%%%%%%%%%%%%%%%%%%%%%%%%
\section{Preliminaries} 
\noindent We first review the formal definitions of a classical PUF which we use as the underlying component of our protocols and constructions, as well as a short description of the quantum analogue of them known as quantum PUF (QPUF) as defined in~\cite{arapinis21}, and more recent hybrid constructions defined in~\cite{chakraborty23}. We also review the quantum information properties that we use in this work.\\

\noindent \emph{Models for PUF.} A Physical Unclonable Function (PUF) is a secure hardware cryptographic device that is, by assumption, hard to clone or reproduce, even for the manufacturer. Mathematically, classical PUFs are usually defined as a black-box with some underlying probabilistic functions, due to their inherent physical randomness. Such functions produce consistent random bits (with some distributions) for a given input, and in that regard are different from a random number generator. We recall the definition from~\cite{chakraborty23} for such functions and their randomness property:

\begin{definition}[Probabilistic Function (from~\cite{chakraborty23})]
A \emph{probabilistic function} is a mapping $f: \R \times \X \rightarrow \Y$ with an input space $\X$, a \emph{random coin} space $\R$, and an output space $\Y$. 
\end{definition}
\noindent For a fixed input $x \in \X$, and a random coin (or key) $R \leftarrow \R$, we define the probability distribution of the output random variable $f(x) := f(R,x)$ over all $ y \in \Y$ as,

\begin{equation}
    p^f_x(y) := \Prb[f(x) = y|x] = \sum_{r:f(r,x) = y} \Prb[R = r].
\end{equation}

A classical PUF can be modelled as a probabilistic function $f:\R \times\X\rightarrow\Y$ where $\X$ is the input space, $\Y$ is the output space of $f$ and $\R$ is the identifier. The creation of a classical PUF is formally expressed by invoking a manufacturing process $f\leftarrow\mathcal{MP}_{C}(\lambda)$, where $\lambda$ is the security parameter. An important property of a classical PUF in this model is the notion of $\emph{randomness}$, which is the maximal probability of $p^f_x(y)$ with an input $x_j\in \X$ on PUF $f_i$ where $i\in\R$. 

\begin{definition}[$p$-Randomness (from~\cite{chakraborty23})]\label{def:p-randomness}
We define the $p$-randomness of a classical PUF $f:\R \times \X\rightarrow\Y$ as \begin{equation}
    p := \max_{\substack{x \in \X \\ y \in \Y}} p^f_x(y).
\end{equation}
\end{definition}

Apart from these, a device with an underlying function $f$ needs to satisfy certain requirements to be qualified as a PUF. There exist different sets of such requirements in the literature~\cite{armknecht16, arapinis21, doosti21, chakraborty23}, however, for this work we adopt the minimal set of requirements also used in hybrid schemes such as in~\cite{chakraborty23}, which includes: $\delta_{1}$-Robustness, $\delta_{2}$-Collision Resistance, and $\delta_{3}$-Uniqueness. We omit the formal definition of these requirements for the purpose of this work, as we are not explicitly using them here, and whenever referring to classical PUF, we assume these requirements are satisfied. \\

\noindent \emph{Quantum and Hybrid PUFs.} Quantum PUF was first introduced as a cryptographic tool in \cite{arapinis21}. A quantum PUF can be considered as a quantum process that responds with a quantum state when challenged with an input quantum state. Quantum PUFs exploit quantum properties to enhance security and resistance against cloning or simulation, however, they come with several shortcomings in practice, such as the complexity of implementation and often the requirement of large quantum memories. To achieve a more practically accessible scheme, in \cite{chakraborty23}, the authors develop a hybrid authentication protocol based on a combination of a classical PUF and quantum communication. In this case, in every challenge-response pair, the challenge is considered to be classical, and the hybrid device responds with a sequence of single-qubit quantum states in every round. The security relies on the uncertainty properties of BB84 states as well as a technique used to boost a stronger adaptive model known as \emph{quantum lock}. The strength of the protocol lies in the simplicity of the construction, as well as the strong security guarantees. However, the security proofs are complicated due to the combination of the classical and quantum-enhanced forgery performed by the adversary.\\

\noindent \emph{Local indistinguishability.} The security feature of our online protocol relies on a very basic, but fundamental task in quantum information theory, i.e. state discrimination. Note that, the states in the set $S_1$ (Eq.(\ref{set})) are exactly identifiable by performing a two-qubit measurement in the Bell basis on the whole system. But in our case, the concerned parties are restricted to perform LOCC. It has already been shown that the set $S_1$ is LI \cite{Ghosh01, Halder18}, as there exists no non-trivial positive operator valued measure (POVM) that can distinguish between the states coming from the particular set, when applied on local subsystems. As the states are LI, this intrinsically ensures that no information about the state can be revealed by having access to only one qubit. Now, in our protocol, we consider only a pair of states out of four, either $\{ \ket{\Phi^+},\ket{\Psi^-} \}$ or $\{ \ket{\Phi^-},\ket{\Psi^+} \}$. Here, the choice of the pair is crucial. It is known that any two pure orthogonal states are always distinguishable via LOCC \cite{Walgate00}. However, for this state discrimination process, the parties must communicate the basis of the measurement along with the outcomes. In the following lemma, we show how to make two pure states LI and hence useful for our protocol.

\begin{lemma}
    When the states $\ket{\Phi^+}$ $(=1/\sqrt{2}(\ket{00}+\ket{11}))$, and $\ket{\Psi^-}$ $(=1/\sqrt{2}(\ket{01}-\ket{10}))$ are distributed among two spatially separated parties, who are allowed to perform measurements either in the computational or in the Hadamard basis probabilistically (governed by the underlying classical primitive), then they are locally indistinguishable when the exact basis of measurement is not revealed. 
    \label{lem_LI}
\end{lemma}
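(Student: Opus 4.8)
The plan is to show that for each of the two admissible measurement settings---both parties measuring in the computational basis, or both in the Hadamard basis---the two states $\ket{\Phi^+}$ and $\ket{\Psi^-}$ induce the \emph{same} distribution over the local outcome pair, so that no LOCC protocol can gain any distinguishing advantage when the chosen basis is hidden. First I would compute the Schmidt-type decomposition of each state in the two relevant product bases. In the computational basis, $\ket{\Phi^+}=\tfrac1{\sqrt2}(\ket{00}+\ket{11})$ yields outcome pairs $00$ and $11$ each with probability $1/2$, while $\ket{\Psi^-}=\tfrac1{\sqrt2}(\ket{01}-\ket{10})$ yields $01$ and $10$ each with probability $1/2$; so in the computational basis the states are perfectly distinguishable \emph{if} one knows that this is the measurement being performed. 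The key step is to do the same in the Hadamard basis: rewriting $\ket{\Phi^+}$ in the $\ket{\pm}$ basis gives $\tfrac1{\sqrt2}(\ket{++}+\ket{--})$, and rewriting $\ket{\Psi^-}$ gives (up to global phase) $\tfrac1{\sqrt2}(\ket{+-}-\ket{-+})$ --- i.e. $\ket{\Psi^-}$ is invariant in form under the Hadamard rotation, and the outcome supports swap roles in a mirrored way.

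The crux of the argument is then the marginal/mixing step: since the basis choice is governed by the underlying classical PUF primitive and is \emph{not} revealed, from the perspective of any distinguisher the effective state is the classical mixture over ``which basis was applied,'' and one must compare the two outcome distributions one obtains for $\ket{\Phi^+}$ versus $\ket{\Psi^-}$ \emph{after averaging over the hidden basis label}. I would show these two averaged distributions over the four-element local outcome set $\{00,01,10,11\}$ coincide (both being, say, uniform when the basis is chosen uniformly, or more precisely matching bin-by-bin once the basis correlation is accounted for), and moreover that the post-measurement residual states carry no further label information. Because LOCC protocols can be simulated by sequences of local POVMs interleaved with classical communication, and each such local POVM coarse-grains the same indistinguishable ensemble, a standard induction on the number of rounds of classical communication (invoking the fact that a single-qubit local measurement on a maximally mixed marginal reveals nothing) shows no finite LOCC protocol separates the two states. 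I would phrase this last part by noting that each party's marginal is maximally mixed for \emph{both} $\ket{\Phi^+}$ and $\ket{\Psi^-}$, so the first message in any LOCC protocol is statistically independent of which state was shared, and conditioning preserves this symmetry.

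The main obstacle I anticipate is making the ``hidden basis'' averaging rigorous rather than heuristic: one must be careful that the distinguisher does not learn the basis through any side channel, and that the correct object to compare is the ensemble $\{(p_{\mathrm{comp}}, \rho^{\Phi^+}_{\mathrm{comp}}), (p_{\mathrm{had}}, \rho^{\Phi^+}_{\mathrm{had}})\}$ versus the analogous ensemble for $\ket{\Psi^-}$, where the $\rho$'s are the classical-quantum states recording outcomes. Establishing that these two ensembles are LOCC-indistinguishable---not merely that individual branches have matching marginals---is the real content, and I expect it reduces to checking that the relevant pair of states lies in the LI set $S_1$ already cited (\cite{Ghosh01, Halder18}) once the basis is traced out, together with the observation that restricting to two of the four Bell states and hiding the basis is exactly what removes the Walgate--Hillery \cite{Walgate00} distinguishability (which crucially required announcing the measurement basis). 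A clean way to close the argument is to exhibit an explicit symmetry---a local unitary on one side, combined with the basis relabelling, that maps the $\ket{\Phi^+}$ ensemble to the $\ket{\Psi^-}$ ensemble---so that any LOCC strategy's success probability is pinned to $1/2$.
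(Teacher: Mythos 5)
Your central step fails. You propose to show that, once the hidden basis is averaged out, $\ket{\Phi^+}$ and $\ket{\Psi^-}$ induce the \emph{same} distribution over the joint outcome pair in $\{00,01,10,11\}$ when both parties measure in the dictated basis. They do not: in the computational basis $\ket{\Phi^+}$ yields only the correlated pairs $00,11$ while $\ket{\Psi^-}$ yields only $01,10$, and exactly the same split occurs in the Hadamard basis, since $\ket{\Phi^+}=\tfrac{1}{\sqrt{2}}(\ket{++}+\ket{--})$ and $\ket{\Psi^-}=\tfrac{1}{\sqrt{2}}(\ket{-+}-\ket{+-})$. The parity $a\oplus b$ of the two local outcomes is therefore a basis-independent invariant that identifies the state perfectly no matter how the basis is chosen or hidden --- indeed this is precisely the correlation the honest parties exploit for completeness of the online protocol. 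The two basis-averaged ensembles you want to compare are supported on disjoint events, so they are trivially distinguishable, and the ``clean closing'' you propose --- a symmetry pinning every LOCC strategy to success probability $1/2$ --- cannot exist. It is also stronger than what the lemma asserts: even in the correct setting, the paper's own security analysis later shows the adversary retains advantage $\tfrac{1}{2}+\delta$ after the outcome announcement, so nothing is ``pinned to $1/2$''; the claim is only that \emph{exact} identification is impossible.

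What the lemma actually hides is the basis from the party who must do the identifying, and crucially only \emph{one} party's outcome bit is ever announced. The paper's proof is the corresponding one-sided argument: $A_2$ measures in whichever of the two bases the classical primitive dictates and obtains a uniformly random bit for either state; $A_1$, who receives that bit but not the basis, is left with a conditional state that is a mixture over $\{\ket{0},\ket{+}\}$ (respectively $\{\ket{1},\ket{-}\}$) under the two hypotheses --- non-orthogonal, full-rank states from which exact identification is impossible. Your observations that both marginals are maximally mixed and that hiding the basis is what defeats the Walgate--Hillery procedure are the right intuitions, but ``conditioning preserves this symmetry'' is false once $A_2$'s outcome is announced, and the two-party joint-outcome comparison is the wrong object entirely. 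To repair the proposal, restrict the distinguisher to one subsystem plus a single announced outcome bit, and argue non-orthogonality (not equality) of the resulting conditional states.
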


\begin{proof}
    Let us consider that there are two spatially separated parties $A_1$ and $A_2$, holding one qubit each and the joint state is prepared in either of the two states $\{\ket{\Phi^+}, \ket{\Psi^-}\}$. The goal here is to identify the state via LOCC without revealing the basis of their corresponding measurement in a given round. To identify, let $A_2$ performs a measurement in the computational basis $\{ \ket{0},\ket{1}\}$ (which is determined by some underlying classical primitive) and obtains outcome $b_{A_2}=0$ (with local state collapsed to $\ket{0}\bra{0}$) or $b_{A_2}=1$ (with local state collapsed to $\ket{1}\bra{1}$) with equal probability due to the structure of the states. Similarly, they can also perform the measurement in the Hadamard basis $\{ \ket{+},\ket{-}\}$. In this case too, the outcomes are $b_{A_2}=0$ when the local state collapses to $\ket{+}\bra{+}$) or $b_{A_2}=1$ when the local state collapses to $\ket{-}\bra{-}$). Even when the outcomes are announced, as the basis is not revealed, for $A_1$, the local state remains a maximally mixed state; hence, exact identification is not possible. 
\end{proof}
\noindent It is evident that similar results can be obtained for states $\{ \ket{\Phi^-},\ket{\Psi^+} \}$.\\

\noindent \emph{Adversarial model and security definitions.} The most relevant adversarial model for communication protocols such as PUF-based authentication is the so-called \emph{network adversary} model. In this adversarial model, it is often assumed that the main parties (which, for authentication, we refer to as the \emph{Verifier} and \emph{Prover}) are honest, but the communication channel is insecure — i.e., there is an adversary on the channel who can arbitrarily access and manipulate the communications. For some protocols in the literature, this is also referred to as the man-in-the-middle attack~\cite{fei2018quantum}. If such adversaries have quantum capabilities, they can be described by either an arbitrary quantum operation on the channel (unbounded quantum adversaries) or as Quantum Polynomial Time (QPT) adversaries, which are any arbitrary quantum algorithms as long as they are polynomially bounded. In an authentication or identification protocol, the goal of the adversary is to fake themselves as the \textit{Prover} — i.e., to forge the identity or the message sent by the \textit{Prover}. For PUF-based protocols, however, there are other considerations that need to be taken into account. First, PUFs are often theoretically assumed to be secure by definition, even against the manufacturer of the device. However, in this work, as we work with weak and realistic PUF models, we cannot assume this. If the manufacturer colludes with the network adversary, the protocols are trivially broken due to the weakness of the classical PUF. Hence, we assume that the PUF or our following constructions based on PUFs are manufactured according to given instructions, and that the underlying classical PUFs are not \emph{fully} broken by any means — including corruption by the manufacturer.

An authentication or identification protocol typically consists of three phases: the \emph{set-up} phase, the \emph{identification} phase, and the \emph{verification} phase. In the setup phase, all the parameters of the protocol are set, and all the prior information required for the main phase is gathered. This means that in a PUF-based protocol, during the setup phase, the CPUF/HPUF/QPUF is queried by the verifier, allowing them to construct a database of CRPs that will later be used for authentication. Then, the physical device is sent to the \textit{Prover}. In general, we assume that this transition happens over a public and potentially insecure channel. Hence, it is often assumed that the adversary also has their own local database of CRPs of the PUF, which they can later use to attack the protocol. Since this interaction with the device is assumed to be allowed for the adversary, no PUF, classical or quantum, can achieve security against unbounded quantum/classical adversaries, as any such devices are eventually characterizable by an exponential number of queries~\cite{arapinis21}. Thus, the most powerful possible adversary to consider is the class of QPT adversaries. Nonetheless, we show that quantum communication allows for an advantage here, which is that, apart from the assumption on the setup phase, the protocol can achieve security against unbounded channel adversaries during the authentication phase. The authentication phase is when the interaction between the \textit{Verifier} and \textit{Prover} happens, leading to some inputs from the \textit{Prover}, which will be used in the verification phase by the verifier to certify that the party or message is authentic. We note that the possible class of attacks performed by the adversary here is quite large, and includes faking communication from the \textit{Verifier} and attempting to extend their database by pretending to be the \textit{Verifier}. As such, this includes any \emph{adaptive strategy} for the adversary. Hence, proving security in such strong models is challenging, and sometimes some schemes can only achieve security against a weaker version of network adversaries known as \emph{non-adaptive} or static adversaries~\cite{chakraborty23}, where the adversary cannot choose the queries and instead receives $q$ challenge-response pairs.

The main security property of an authentication protocol is \emph{unforgeability}. Informally, unforgeability means that the adversary is unable to produce a new valid input-output pair (CRP for PUFs) that can pass verification after seeing a subset of the inputs and outputs. The unforgeability for classical PUFs has been defined in~\cite{armknecht16}. Unforgeability has also been extended to the quantum setting~\cite{doosti2021unified,boneh2013quantum,arapinis21,chakraborty23}, both for PUFs and other schemes. Here, we use the notion of unforgeability known as \emph{universal unforgeability}, defined in~\cite{doosti2021unified,arapinis21}, which is an average-case definition over the choice of the inputs. We skip the formal definition here and refer to~\cite{doosti2021unified,chakraborty23} for the formal game and definition.
\begin{definition}[Universal Unforgeability (informal)]
\label{def:sec1}
A PUF/HPUF, with verification algorithm $\mathtt{Ver}(.)$ provides universal unforgeability against an adaptive QPT adversary $\A$ if the probability of the adversary producing a response $r$ (classical or quantum, depending on the scheme) for a randomly chosen challenge $c$, successfully passing $\mathtt{Ver}$ is negligible with respect to the security parameter. 
\end{definition}
This intuitively means that the adversary cannot predict the behaviour of the PUF after seeing $q$ queries that are chosen adaptively, as long as the number of queries is polynomially bounded. This is the most practically relevant notion of unforgeability for authentication protocols.

Finally, going from the security of the primitive to the protocol, we characterise the \emph{Correctness} (or \emph{Completeness}) and the \emph{Soundness} (or security) of protocols. \emph{Correctness} means that the success probability of any honest party successfully passing the verification is close to $1$ with negligible error. \emph{Soundness} of any verification protocol ensures that the success probability of any adversary (depending on the adversarial model) to pass the overall process of verification is bounded by $\epsilon$, which is often expected to be a negligible function of the security parameter. The completeness and soundness can be defined per round or over $M$ rounds of the protocol in total.

%%%%%%%%%%%%%%%%%%%%%%%%%%%%%%%%%%%%%%%%%%%%%
\section{Our new hybrid construction}
\noindent \emph{Hybrid Entangled PUF (HEPUF)}. We now introduce a new hybrid PUF, which takes a classical input and produces an entangled quantum output. Note that, we only use this construction for our \emph{online protocol}. 

In the spirit of hybrid PUF in~\cite{chakraborty23}, we also start with a CPUF with a p-randomness value as defined in Def. \ref{def:p-randomness}. An HEPUF produces a bipartite entangled pure state as an output, which encodes the outcome of the CPUF. Our general definition of HEPUF, in fact, covers a family of constructions, characterised by the selected set of entangled states. One obvious choice, for this set for instance, is the family of Bell states: 

\begin{equation}
S_1: 
\begin{cases}
\ket{\Phi^+}=\frac{1}{\sqrt{2}}(\ket{00} + \ket{11})= \frac{1}{\sqrt{2}}(\ket{++} + \ket{--})\\
\ket{\Psi^+}=\frac{1}{\sqrt{2}}(\ket{01} + \ket{10})= \frac{1}{\sqrt{2}}(\ket{++} - \ket{--})\\
\ket{\Phi^-}=\frac{1}{\sqrt{2}}(\ket{00} - \ket{11})= \frac{1}{\sqrt{2}}(\ket{+-} + \ket{-+}) \\
\ket{\Psi^-}=\frac{1}{\sqrt{2}}(\ket{01} - \ket{10})= \frac{1}{\sqrt{2}}(\ket{-+} - \ket{+-})
\end{cases}
\label{set}
\end{equation}

We can now define HEPUFs as follows:

\begin{definition}[HEPUF]\label{hepuf}
    Let $f:\X \rightarrow \Y$ be a classical PUF with an $n$-bit input and $m$-bit output i.e. $f:\{0,1\}^n \rightarrow \{0,1\}^{m}$. For every input $x$, we have $y=f(x) = y^1||y^2$, where the concatenation assigns the first $l^1$ bit to $y^1$ and the rest of the $l^2 = m - l^1$ bit to $y^2$~\footnote{In most cases $y^1$ and $y^2$ have $[\frac{m}{2}]$ bits each}. A HEPUF then is defined as a 5-tuple of $(\mode, f, S, M, \Eval(.), \Mes(.))$, where $\mode$ is a variable that shows the \emph{mode of operation} of the HEPUF, $S$ is the set of two-qubit entangled states used for the encoding, $M$ is a set of measurement used for local decoding, $f$ is a CPUF as described above. Let $k = \lfloor l^1 / |S| \rfloor$ be the size of the quantum output of HEPUF. Then $\Eval$ and $\Mes$ are algorithms that run as follows, on classical input $x$:

    \begin{equation}
    HEPUF :=
        \begin{cases}
          \mode = 0 ~~~~ \begin{split}      
          &\Mes(.)=\perp \\
          &\Eval(x) = f(x)\\          
          & \Eval(x) \leftarrow HEPUF
          \end{split}
          \\\\
          \mode = 1 ~~~~ \begin{split}      
          &\Mes(.)=\perp \\
          &\Eval(x) = \Enc_S(y^2),\\
          & \Enc_S(y_i^2) = \ket{\Psi^{y^2_i}_S}_{VP}, \\
          & \rho^i_V = \Tr_P[\ket{\Psi^{y^2_i}_S}\bra{\Psi^{y^2_i}_S}_{VP}]\\
          &[\otimes^k_{i=1} \rho^i_V] \leftarrow HEPUF
          \end{split}
          \\\\
          \mode = 2 ~~~~ \begin{split}      
          &\Eval(.) = \perp\\
          &\Mes(1) := b = M_{y^1}(\rho_P),\\
          & b_i \in \{0,1\}, b = b_i \dots b_k ,\\
          & \rho^i_P = \Tr_V[\ket{\Psi^{y^2_i}_S}\bra{\Psi^{y^2_i}_S}_{VP}],\\
          & \rho_P = [\otimes^k_{i=1} \rho^i_P]\\
          & b \leftarrow HEPUF
          \end{split}
        \end{cases}
    \end{equation}    
    
    \noindent where $\Enc_S(y^2)$ encodes the bits of $y^2$ into a series of bipartite states $\ket{\Psi^{y^2_i}_S}_{VP}$, or in other words, assigns an element of the set $S$ to the output value of the CPUF's first subsection. $M_{y^1}$ refers to the measurement of $P$ subsystem according to $y^1$, resulting to bit-string b.
\end{definition}

The above definition gives a general construction for HEPUFs, which can be instantiated with different sets of entangled states and POVMs to capture a broad family of such hardware modules. Informally, every HEPUF works as follows: The HEPUF has three general modes that are controlled by the value $\mode$. The $\mode = 0$ is a one-time mode that is only used in the setup phase by the \textit{Verifier}, and the HEPUF can never go back to this mode after the variable is changed per round. This is similar to one of the mechanisms proposed in~\cite{chakraborty23} for the HPUF, however, here we explicitly embed this mode in the definition. Then in $\mode = 1$, the HEPUF evaluates the underlying CPUF, generates the two parts of the output $y^1$ and $y^2$ and, given the pre-defined set of bipartite states $S$, encodes the second half of the response tuple i.e. $y^2$ into these states. Here $k$ of such states are generated for every input $x$. Then the HEPUF, keeps the \textit{Prover}'s subsystem of these states and outputs the $V$ subsystem as the quantum outcome. Hence the quantum outcome of the HEPUF in this mode, consists of the tensor product of $k$ mixed states. In $\mode = 2$, the HEPUF measures the subsystem of $\Pp$ using a measurement set $M$, with respect to the first part of the response tuple i.e. $y^1$. This means that $y^1$ usually determines the measurement basis. 

In the specific instantiation of this construction that we use for our online protocol, we use the set $S = \{\ket{\Phi^+}, \ket{\Psi^-}\}$ (or equivalently $\{\ket{\Psi^+}, \ket{\Phi^-}\}$) denoting two of the four states in the Bell basis. We fix the measurement set to be the single qubit measurement in the conjugate basis (computational and Hadamard basis). As such, the above construction becomes rather simple and implementable with many of the existing quantum communication technologies.

%%%%%%%%%%%%%%%%%%%%%%%%%%%%%%%%%%%%%%%%%%%%%%%%%%%%%%%%%%%%%%%%%%%%%%%%%%%%%%%%%%%%%
%%%%%% Bell-based protocol %%%%%%%

\section{Offline Authentication Protocol with Bell States}

\begin{figure}[htp]
    \fbox{\includegraphics[scale=0.14]{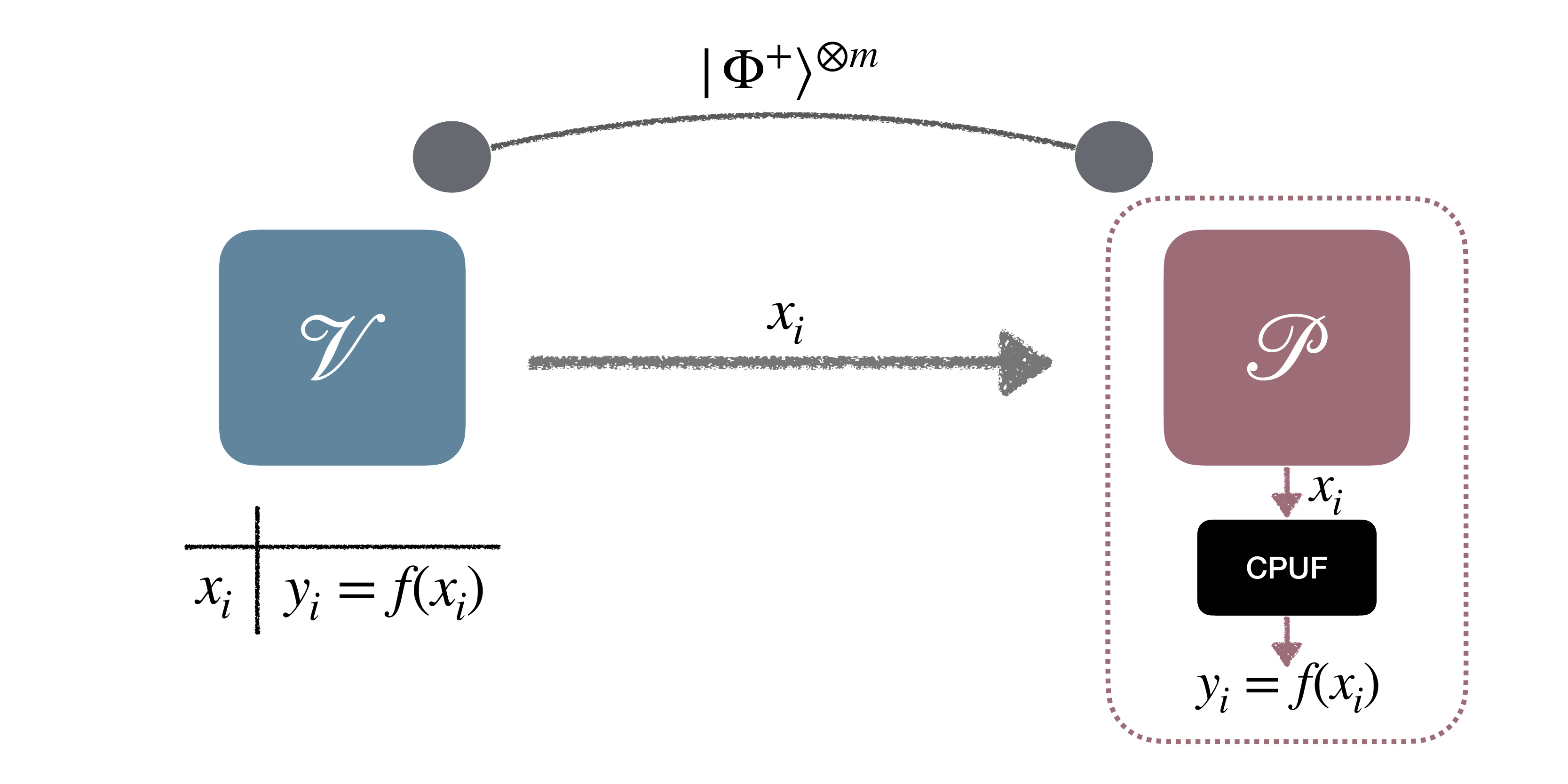}}
    \caption{\footnotesize{Schematic for the offline protocol with pre-shared Bell states from a trusted source.}}
    \label{off_schematic}
\end{figure}

The first class of authentication protocol that we propose using CPUF, is a simple protocol that uses pre-shared and trusted Bell states. Here, no quantum communication is required between the parties during the process of authentication. We refer to this protocol as an \emph{Offline Authentication Protocol} since in a quantum network with enabled entangled nodes, the parties can use this protocol to authenticate each other at any point in time, without requiring any additional quantum communication, and only using LOCC. This protocol can be seen as the first entanglement-based analogue of the hybrid CPUF-based authentication protocol proposed in \cite{chakraborty23}. 

Informally, the protocol proceeds as follows (also depicted in Fig.~\ref{off_schematic}): The \textit{Verifier} ($\V$) and the \textit{Prover} ($\Pp$) use a CPUF with $m$-bit length output. Similar to the usual setup of the PUF-based protocols, we assume that the \textit{Verifier} has access to an efficient-size classical database of the inputs and outputs of the PUF, which can be acquired during the setup phase, and the \textit{Prover} has access to the device itself. Here, the parties also receive $m$ copies of $\bellp$ from a trusted source. The choice of the $\ket{\Phi^+}$ state for this protocol is arbitrary, and the protocol works with any one of the Bell states in set $S_1$ in Eq. (\ref{set}). Note that, the protocol consumes $m$ copies of $\bellp$ for every round of authentication with input $x$ and an $m$-bit output of CPUF. During the authentication phase, for the $i$-th bit, the \textit{Verifier} selects a challenge $x$ at random from the database and sends it to the \textit{Prover}. The authentic \textit{Prover} can obtain the outcome $f(x)$ by querying $x$ from the CPUF. Then $\Pp$ measures their share of $\bellp$ per bit, in a particular basis chosen based on $[f(x_i)]$ and records the corresponding outcome as a bit-string $a = a^1\dots a^i \dots a^m$. Finally, $\V$ can deterministically authenticate $\Pp$ by locally measuring their share of the Bell-pairs and comparing her outcome $b$  to the string $a$. The formal description of the protocol is given in Protocol~\ref{prot_bell}.

Before providing the security proof of this protocol, we point out a few important properties of this protocol. First, we note that the protocol only uses classical communication between the two parties, and no quantum communication is required. The entangled states are distributed at the beginning of the protocol, and the parties only perform LOCC. This is why this protocol is suited to offline quantum networks with pre-shared entanglement amongst the nodes, where at any point in time that the authentication is required, the parties can consume as many numbers of states as required for the authentication. In an experimental domain, this property plays a crucial role, as it makes the process of authentication fast. Another interesting property of the protocol, especially when realised with a PUF, is that, unlike almost all the PUF-based protocols, the response of the PUF, i.e. $y_i$, is never sent over the communication channel. This implies that, in the vanilla version of the protocol where the entangled source is trusted and the states are in fact Bell-pairs, the adversary cannot gather learning data to break the PUF from eavesdropping on the communication channel. Although this argument is not yet enough for the security proof of the protocol, it provides a good intuition of why such entangled-based version PUF protocols are interesting to explore.

\begin{protocol}
    \begin{minipage}{0.45\textwidth}
    \justifying
    \noindent The protocol runs between two parties, a \textit{Verifier} ($\V$) and a \textit{Prover} ($\Pp$), and includes a trusted distributor to share entangled states between the two parties prior to the protocol. The protocol includes a CPUF $f:\X \rightarrow \Y$ with an $n$-bit input and $m$-bit output, as the hardware resources.

    \begin{enumerate}
        \item \textbf{Setup phase:}
        \begin{enumerate}
            \item $\V$ and $\Pp$ receive their share of the Bell states, they consume $\bellp^{\otimes m}$ copies per every authentication round per challenge.
            \item $\V$ constructs a database of pairs $\mathtt{DB}:= \{(x_i,y_i)\}_{i=1}^{d}$ from PUF where $y_i = f(x_i)$. 
        \end{enumerate}
        \item \textbf{Authentication phase:}
        \begin{enumerate}
            \item \label{a}$\V$ chooses a pair $(x_i,y_i) \in \mathtt{DB}$ uniformly at random (per round), and sends $x_i$ to $\Pp$ over the public classical channel.
            \item \label{b}$\Pp$ receives $x_i$ and obtains $y_i = f(x_i)$ by interacting with the CPUF locally. 
            \item \label{c} $\texttt{For j=0 to j=m}$:\\
            $\texttt{If}$ the jth bit of $y_i$, i.e $y^j = [f(x_i)]^j = 0$, \texttt{then:} $\Pp$ measures their qubit of one of the $\bellp$ in the computational basis ($\{\ket{0},\ket{1}\}$), and records the measurement outcome in the bit $a^j$.\\
            \texttt{Otherwise:} if $y^j = 1$, $\Pp$ measures their qubit of the Bell state in the Hadamard basis ($\{\ket{+},\ket{-}\}$), and records the measurement outcome in the bit $a^j$.
            \item \label{d} After all the entangled states have been measured, $\Pp$ sends the string $a_i = a^1\dots a^j \dots a^m$ to $\V$ over the public classical channel.
        \end{enumerate}
        \item \textbf{Verification phase:}
        \begin{enumerate}
            \item \label{e} $\V$ also measures their share of the Bell states per each bit of the outcome in the basis determined by $y_i$, similar to $\Pp$, and records the outcome bits in the string $b_i = b^1\dots b^j \dots b^m$.\\
            \texttt{If:} $a_i = b_i$, the $\V$ accepts, \texttt{otherwise}, $\V$ rejects and aborts the protocol.
        \end{enumerate}
    \end{enumerate}
    \end{minipage}
    \caption{Entanglement-based offline authentication}
    \label{prot_bell}
\end{protocol}

Finally, an interesting comparison can be drawn between this protocol and the HLPUF protocol in \cite{chakraborty23}. The HLPUF-based authentication protocol is a BB84-style quantum protocol which encodes the PUF's output into quantum states to enhance the security of insecure classical PUFs. Here, our protocol resembles the E91 protocol and does not rely on encoding the PUF outcomes, as they are being used for local measurement instead. Note that, unlike the HPUF-based prepare and measure protocol \cite{chakraborty23}, the authentication process of this protocol relies on the correct choice of the basis for measurement. This choice is solely governed by the response obtained by querying the authentic database. Also, the classical communication from the honest \textit{Prover} to the \textit{Verifier} only reveals the outcome of the measurement, which is random and independent of the choice of the measurement basis.\\

We now provide the completeness and soundness of the protocol in the following. 
\begin{theorem}
    Protocol~\ref{prot_bell} satisfies completeness and exponential security against any unbounded quantum eavesdropper, given trusted perfect Bell-pairs. 
\end{theorem}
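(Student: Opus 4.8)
\emph{Overall approach and completeness.} The plan is to treat completeness and soundness separately, exploiting one structural fact: the honest response $a_i$ is a string of measurement outcomes taken on $\Pp$'s halves of perfect Bell pairs, so it is simultaneously (i) perfectly correlated with $\V$'s string $b_i$ and (ii) information-theoretically unpredictable to anyone who does not hold the other halves. For completeness I would use that $\bellp = \tfrac{1}{\sqrt{2}}(\ket{00}+\ket{11}) = \tfrac{1}{\sqrt{2}}(\ket{++}+\ket{--})$, so a matched $Z\otimes Z$ measurement yields $00$ or $11$ and a matched $X\otimes X$ measurement yields $++$ or $--$, each with probability $1/2$. Hence for every bit $j$, when $\V$ and $\Pp$ measure their shares in the basis dictated by $[f(x_i)]^j$ they obtain equal outcomes with certainty; therefore $a_i = b_i$ deterministically, the honest $\Pp$ is always accepted, and completeness holds with zero error (and a fortiori over any number of rounds).

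\emph{Soundness.} The key step is to pin down the adversary's view. Since the Bell pairs are delivered by the trusted source directly to $\V$ and $\Pp$, a channel adversary $\A$ never touches $\V$'s $m$ qubits: $\A$'s view in a round is purely classical and at most a function of the challenge $x_i$, of any challenge--response data $\A$ has gathered from the weak CPUF --- even the full description of $f$ --- and of $\A$'s own messages, none of which is correlated with $\V$'s registers. Now I would invoke the single-qubit structure already used in Lemma~\ref{lem_LI}: the reduced state of $\V$'s half of each $\bellp$ is $\mathbb{I}/2$, so for any basis $\theta_j\in\{Z,X\}$ the recorded bit $b^j$ is uniform and independent of everything outside $\V$'s lab. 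In particular, knowing $f$ reveals only the \emph{bases} $\theta_1,\dots,\theta_m$, which is useless for predicting the fresh uniform outcomes --- this is exactly why $y_i$ need never appear on the wire. Tensoring over $j$, the string $b_i$ is uniform over $\{0,1\}^m$ and independent of $\A$'s view, so for any (even unbounded, quantum) strategy producing a string $a_i'$ one gets $\Prb[a_i' = b_i] \le 2^{-m}$. As each round consumes fresh Bell pairs and $\V$ aborts on the first mismatch, the success probability over $M$ rounds stays at most $2^{-m}$ (or $M\,2^{-m}$ if one bounds instead the event of passing some round); identifying $m$ with (a polynomial in, at least linear in) the security parameter makes this exponentially small.

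\emph{Main obstacle.} The one delicate point is the fully general man-in-the-middle adversary, who may interleave interactions with $\V$ and $\Pp$ adaptively, forward a modified challenge $x_i'$ to $\Pp$, and reuse $\Pp$'s reply. I would argue that $\Pp$'s reply consists of outcomes on $\Pp$'s \emph{own fresh} Bell halves, which are uncorrelated with the particular pairs $\V$ holds in the attacked round, so this reply cannot help predict $b_i$; the only genuinely useful way to use $\Pp$ is a real-time relay, which is not a forgery but an honest authentication of $\Pp$. Making precise the statement ``relaying is the only useful interaction with $\Pp$'' via a short game-based reduction --- together with the easy check that neither $\V$ nor $\Pp$ ever transmits $y_i = f(x_i)$ or $b_i$, so no basis information leaks onto the channel --- is the crux; once that is in place, the $2^{-m}$ bound follows as above. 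I would also remark that this argument is markedly lighter than the HLPUF security proof of~\cite{chakraborty23}, precisely because $y_i$ never appears on the wire and so no combined classical/quantum forgery has to be analysed.
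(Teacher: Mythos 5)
Your proposal is correct and follows essentially the same route as the paper's own proof: completeness from perfect correlation of matched-basis measurements on $\ket{\Phi^+}$, and soundness from the fact that the \textit{Verifier}'s reduced state is maximally mixed, so each outcome bit $b^j$ is uniform and independent of the adversary's purely classical view, yielding the $(1/2)^m$ bound. The only place you go beyond the paper is in flagging the adaptive man-in-the-middle who queries $\Pp$ out of band; the paper's proof silently restricts the adversary to intercepting and substituting the classical string $\tilde{a}$, so your remark that the relay/interleaving case deserves a short separate reduction is a fair refinement rather than a different method.
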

\begin{proof}
\emph{Completeness:} According to the definition of completeness, we need to show that the success probability of an honest \textit{Prover} in the absence of an adversary over the channel is overwhelmingly close to one. Since $\V$, and $\Pp$ share a perfect $\bellp$ state, and since they both measure in the same basis (determined by the outcome bit of the PUF, $y_i$), their measurement outcomes are perfectly correlated, i.e $\forall j: a_j = b_j$. As the verification of $\V$ only relies on this equality testing, which is given as,
    \begin{equation}
        \Prb[\V \text{ accept}_{\Pp}] = \Prb[\texttt{acc} \leftarrow \V(a^j) | \text{no Adv}: \forall j] = 1
        \label{eq:p1-completness}
    \end{equation}

\emph{Security:} We show that the success probability of an unbounded quantum eavesdropper, and any overall QPT adversary, performing general forgery attacks is bounded. Given the iid assumption on the PUF's outcome bits, we bound the probability for one bit and then straightforwardly extend it to an $m$-bit outcome PUF. Since each Bell state is perfect and shared prior to the authentication phase by a trusted party, the adversary has no access to the entangled state. As a result, the adversary cannot perform any quantum strategies over the channel. The only possible attack viable for the adversary ($\A$) is intercepting the classical communication and sending their own bit $\Tilde{a}^j$ (instead of $a^j$) to pass the verification process. According to definition of security (in Def.~\ref{def:sec1}) we have,
\begin{equation}
\begin{split}
    \Prb[\V \text{accept}_{\A}] & = \Prb[\texttt{acc} \leftarrow \V(\Tilde{a^j}) |  \Tilde{a^j} \leftarrow \A: \forall j]\\
    & = \Prb[b^j = \Tilde{a}^j : \forall j]\\
    & = \Prb[\A \text{ guessing } a^j : \forall j]
\end{split}
    \label{eq:p1-sec-1}
\end{equation}
For each bit $j$, the probability of $\A$ correctly guessing $a^j$ is exactly $1/2$, even for a PUF with bias value $\delta$. Since here the adversary has no access to the encoded outcome of the PUF into the Bell state, and even if they could for the classical PUF's bit, with probability $p^c_{forge}$, they cannot manipulate the state and hence the measurement outcome of the \textit{Verifier} remains purely random, and $\Prb[\A \text{ guessing } a^j] = 1/2$. Hence over $m$-bits we get,
\begin{equation}
\begin{split}
    \Prb[\V \text{ accept}_{\A}] = \Prb[b^j = \Tilde{a}^j : \forall j] = \left(\frac{1}{2}\right)^m = negl(\lambda).
\end{split}
    \label{eq:p1-sec-1}
\end{equation}
This concludes the proof.
\end{proof}

Even though this vanilla version of the protocol achieves exponential security, (that too) with one round of authentication for an $m$-bit PUF, the trust assumptions on the pre-shared Bell-states are rather strong. Especially assuming the perfect $\bellp$, and that the adversary has no access to this quantum resource, is a strong, and in many cases, unrealistic assumption. One might ask whether the protocol still achieves reasonable security in the presence of an imperfect Bell state, where the adversary can also entangle their subsystem with the state shared between the \textit{Verifier} and the \textit{Prover}. In the spirit of security proofs of entanglement-based QKD protocols using entropic uncertainty principles, we show that the protocol still achieves security. Since the completeness still remains the same, we only give the security proof in the following theorem. 

% \begin{theorem}
%     Protocol~\ref{prot_bell} satisfies security against any unbounded quantum eavesdropper, given imperfect and potentially disrupted Bell-pairs.
% \end{theorem}

\begin{theorem}
    Protocol~\ref{prot_bell} satisfies soundness against any unbounded quantum adversary, even when the shared entangled pairs are imperfect. More precisely, suppose that for each round of authentication, if the joint state $\rho_{VP}$ shared between the Verifier and the honest Prover satisfies $F(\rho_{VP},\ket{\Phi^+}\bra{\Phi^+}) \ge 1-\epsilon$, then the probability that any adversary successfully forges all $m$ bits of a protocol execution is upper bounded by,
    \[
    \Prb[\V \text{ accept}_{\A}] \;\le\; 2^{-m(1-\mu(\epsilon))},
    \]
    and it is negligible in the security parameter $\lambda$.
\end{theorem}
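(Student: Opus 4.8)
The plan is to transplant, to the entanglement-based setting, the standard route for proving security of entanglement-based QKD through entropic uncertainty relations with quantum side information. The working model for an unbounded $\A$ is the natural one: having tampered with the entanglement distribution, $\A$ holds a register $E$ which, together with the honest \textit{Prover}'s register $P$ (kept in $\Pp$'s lab, out of $\A$'s reach), purifies the noisy pair $\rho_{VP}$ delivered to $\V$ and $\Pp$; in the forgery session $\A$ must reproduce, from $E$ and the public transcript, the string $b=b^1\cdots b^m$ that $\V$ obtains by measuring her halves in the bases dictated by $y_i=f(x_i)$. Completeness is exactly as in the previous theorem up to an $O(m\sqrt{\epsilon})$ error (Fuchs--van de Graaf applied per pair), so only soundness needs work.

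First I would reduce to a single pair. If the hypothesis bounds the $m$-pair joint state, monotonicity of fidelity under partial trace gives $F(\rho_{V_jP_j},\ket{\Phi^+}\bra{\Phi^+})\ge 1-\epsilon$ for every $j$; and this in turn forces the marginal $\rho_{V_jP_j}$ to be $O(\sqrt{\epsilon})$-decoupled from all other registers, including $E$ and the other pairs. A hybrid/chain-rule argument then bounds $\Prb[\V\text{ accept}_\A]$ by $\prod_{j=1}^m \Prb[\tilde a^j=b^j\mid \text{history}]$, where each factor is analysed against an arbitrary adaptive state of $\A$'s memory. For a fixed round $j$, let $X_j$ and $Z_j$ denote the outcome $\V$ would get by measuring $V_j$ in the Hadamard and in the computational basis, respectively. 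Since $V_jP_jE$ is pure and the two bases are mutually unbiased ($c=\tfrac12$), the Tomamichel--Renner uncertainty relation yields
\begin{equation}
H_{\min}(X_j\mid E) + H_{\max}(Z_j\mid P_j)\ \ge\ 1,\qquad H_{\min}(Z_j\mid E) + H_{\max}(X_j\mid P_j)\ \ge\ 1 .
\end{equation}
The fidelity hypothesis enters through the two max-entropy terms: on $\ket{\Phi^+}$ the local outcomes of $\V$ and $\Pp$ in any common basis coincide deterministically, so by the trace-distance bound $D(\rho_{V_jP_j},\ket{\Phi^+}\bra{\Phi^+})\le\sqrt{2\epsilon}$ the register $P_j$ nearly determines the corresponding local outcome of $\V$, whether computational ($Z_j$) or Hadamard ($X_j$), giving $H_{\max}(Z_j\mid P_j),\,H_{\max}(X_j\mid P_j)\le \mu(\epsilon)$ for an explicit $\mu$ with $\mu(\epsilon)\to 0$ as $\epsilon\to 0$. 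Hence $H_{\min}(b^j\mid E)\ge 1-\mu(\epsilon)$ no matter which of the two conjugate bases the bit $y_i^j$ selects, so $\Prb[\tilde a^j=b^j\mid\text{history}]\le 2^{-H_{\min}(b^j\mid E)}\le 2^{-(1-\mu(\epsilon))}$ for any (unbounded) measurement producing $\tilde a^j$ from $E$ and the transcript. Multiplying over the $m$ conditionally near-independent rounds gives $\Prb[\V\text{ accept}_\A]\le 2^{-m(1-\mu(\epsilon))}$, which for $m=\mathrm{poly}(\lambda)$ and any fixed $\epsilon$ with $\mu(\epsilon)<1$ is $\mathrm{negl}(\lambda)$; at $\epsilon=0$ one recovers the $2^{-m}$ bound.

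The step I expect to be the real obstacle is making the reduction to single pairs rigorous simultaneously with pinning down the quantitative content of the max-entropy bound. Concretely, I need the per-pair marginal hypothesis --- rather than a global one on $\rho_{V^mP^m}$ --- to genuinely suffice against a fully adaptive $E$, which requires carefully propagating the near-decoupling of each $V_jP_j$ through $\A$'s conditioning on earlier rounds; and I need $\mu(\epsilon)$ with a clean, non-vacuous constant so that $1-\mu(\epsilon)$ stays bounded away from $0$ in the regime of interest. A secondary technical point is verifying that the uncertainty relation is invoked on the correct tripartition $V_j \mid E \mid P_j$ and that any smoothing needed for the chain rule does not erode the per-bit exponent.
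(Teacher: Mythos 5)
Your proposal follows essentially the same route as the paper's proof: an entropic uncertainty relation with quantum side information ($H_{\min}(X|E)+H_{\max}(Z|P)\ge 1$ for conjugate bases), a fidelity-based perturbation argument showing $H_{\max}(\cdot|P)\le\mu(\epsilon)$ (the paper cites a continuity bound for conditional max-entropy where you use Fuchs--van de Graaf plus near-determinism, but the resulting $\mu(\epsilon)=2\sqrt{\epsilon}+h_2(2\sqrt{\epsilon})$ is the same), and a per-round min-entropy bound multiplied over $m$ rounds. The multi-round adaptivity issue you flag as the ``real obstacle'' is genuine, but the paper treats it no more rigorously than you do --- it simply asserts additivity of min-entropy across rounds under the label ``entropy accumulation.''
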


\begin{proof}
We analyse one round of authentication in details and then extend the analysis for $m$ rounds by entropy accumulation. Let, $\rho_{VPA}$ be the joint state of the \textit{Verifier} $\V$, the honest \textit{Prover} $\Pp$, and the adversary $\A$ before the measurement.
By assumption,
\begin{equation}
F(\rho_{VP},\ket{\Phi^+}\bra{\Phi^+}) \ge 1-\epsilon ,
\end{equation}
with $\ket{\Phi^+}=(\ket{00}+\ket{11})/\sqrt{2}$ and $\rho_{VP}=\tr_\A [\rho_{VPA}]$.\\

\noindent Let $Y\in\{0,1\}$ be the output bit of the CPUF, based on which the \textit{Verifier} chooses the measurement basis (either computational $Z_V$ or Hadamard $X_V$).
Let $B^V$ be the \textit{Verifier}’s measurement outcome and let $A$ denote all quantum side information held by the adversary. 
By the entropic uncertainty relation with quantum side information~\cite{berta2010uncertainty}, we have 
%\suchetana{(should not we explain why we have $1$ on the right hand side?)} \mina{The RHS is $log_2 (1/c)$ where $c=1/2$ here. You can mention the full form if you want.}
\begin{equation}
    H_{\min}(X_V|A,Y) + H_{\max}(Z_V|P,Y) \ge 1 .
\end{equation}
where the right hand side is $log_2 (1/c)$ with $c=1/2$. Since $F(\rho_{VP},\ket{\Phi^+}\bra{\Phi^+})\ge 1-\epsilon$, by the continuity bound for conditional max-entropy~\cite{winter2016tight}, for two states $\rho_{AB}$ and $\sigma_{AB}$ with fidelity at least $1-\epsilon$, this bound states that:
\begin{equation}
\bigl| H_{\max}(A|B)_\rho - H_{\max}(A|B)_\sigma \bigr|
\le 2\sqrt{\epsilon}\log d_A + h_2(2\sqrt{\epsilon}),
\end{equation}
where $d_A$ is the dimension of system $A$ and
$h_2(p)$ is the binary entropy function. In our setting $A=\V$, is a qubit, hence $d_A=2$, and $H_{\max}(Z_V|P)_{\Phi^+}=0$ for the ideal Bell state.
Therefore,
\begin{equation}
H_{\max}(Z_V|P,Y) \le \mu(\epsilon),
\quad
\mu(\epsilon)=2\sqrt{\epsilon}+h_2(2\sqrt{\epsilon}).
\end{equation}
Consequently,
\begin{equation}
H_{\min}(B^V|A,Y) \ge 1-\mu(\epsilon).
\end{equation}
For $\epsilon\le 2^{-\lambda}$, $\mu(\epsilon)$ is negligible in the security
parameter $\lambda$.
\iffalse
{\suchetana{I do not exactly understand the part with the conditional entropy bound. How did we arrive at $\mu(\epsilon)$? we should mention what is $h_2$.}}
\begin{equation}
H_{\max}(Z_V|P,Y) \le \mu(\epsilon),
\quad
\mu(\epsilon)=2\sqrt{\epsilon}+h_2(2\sqrt{\epsilon}).
\end{equation}
Therefore,
\begin{equation}
H_{\min}(B^V|A,Y) \ge 1-\mu(\epsilon).
\end{equation}
For $\epsilon \le 2^{-\lambda}$, $\mu(\epsilon)$ is negligible in $\lambda$.\\
\fi
%\mina{If you like a bit more details, we can replace the above part after the citation as follows:}

\noindent Now we find the bound on the adversary’s guessing probability. The optimal probability that the adversary correctly predicts the \textit{Verifier}’s bit in one round is given by, 
\begin{equation}
\Pr[\tilde{B}^V = B^V]
  \le 2^{-H_{\min}(B^V \mid A,Y)}
  \le \tfrac{1}{2} + \mu(\epsilon),
\end{equation}

\noindent We extend our analysis for $m$ rounds, min-entropy accumulation theorem.
Let $B_1^V,\ldots,B_m^V$ be the \textit{Verifier}’s outcomes in $m$ independent rounds and $Y_1,\ldots,Y_m$ the corresponding CPUF bits. We have:
\begin{equation}
\begin{split}
    H_{\min}(B_1^V\cdots B_m^V \mid A,Y_1\cdots Y_m) & \ge \sum_{i=1}^m H_{\min}(B_i^V \mid A,Y_i) \\
    & \ge m(1-\delta(\epsilon)).
\end{split}
\end{equation}
\noindent Therefore the probability that the adversary forges all $m$ bits is bounded by,
\begin{equation}
\Pr_{forge}
  \le 2^{-H_{\min}(B_1^V\cdots B_m^V \mid A,Y_1\cdots Y_m)}
  \le 2^{-m(1-\mu(\epsilon))}.
\end{equation}
\noindent For a polynomial $m$, this is a negligible function in $\lambda$, which concludes the proof.  
\end{proof}

\section{Online Authentication Protocols with Higher-dimensional States}

\noindent The second class of protocols that we discuss here also uses bipartite maximally entangled states, but this time, in an online way. The sharing of entanglement only takes place during the process of authentication. As quantum communication is involved in the protocol, we call this protocol an online one. This essentially indicates that the protocol does not require any pre-shared entanglement between the parties. However, the main motivation for this protocol is to exploit a unique information-theoretic property of quantum states, known as \emph{local indistinguishability}. The online protocol relies on HEPUF introduced in Def. \ref{hepuf}, which essentially involves the set of states introduced in Eq. \ref{set}. \\

Here, we provide a sketch of the protocol flow, as also depicted in Fig.\ref{on_schematic}. This protocol uses an HEPUF construction, as defined in Definition\ref{hepuf}, instead of relying directly on a CPUF. The \textit{Verifier} ($\V$) and the \textit{Prover} ($\Pp$) each have access to an HEPUF built from a weak CPUF with $n$-bit input and $2^m$-bit output. During authentication, in the $i$-th round, the \textit{Verifier} sends a challenge $x_i$ to the \textit{Prover}. The honest \textit{Prover} inputs this challenge into the HEPUF. The second part of the response, denoted $y_i^1 || y_i^2$ (specifically, the $y_i^2$ portion), is internally encoded into a bipartite entangled state, and a corresponding qubit is output as the response. The \textit{Prover} then sends this subsystem to the \textit{Verifier} via the quantum channel. Next, the \textit{Prover} instructs the HEPUF to perform a projective measurement on their local subsystem, with the measurement basis determined by the value of $y_i^1$. The outcomes of these measurements for each individual encoded bit are recorded and publicly announced over a classical channel. This measurement causes the \textit{Verifier}’s state to collapse into a specific state. Finally, the \textit{Verifier} measures their local qubit and performs a verification procedure using both their own measurement outcome and the \textit{Prover}’s announcement, in order to accept or reject the \textit{Prover}.

\begin{figure}[htp]
    \fbox{\includegraphics[scale=0.14]{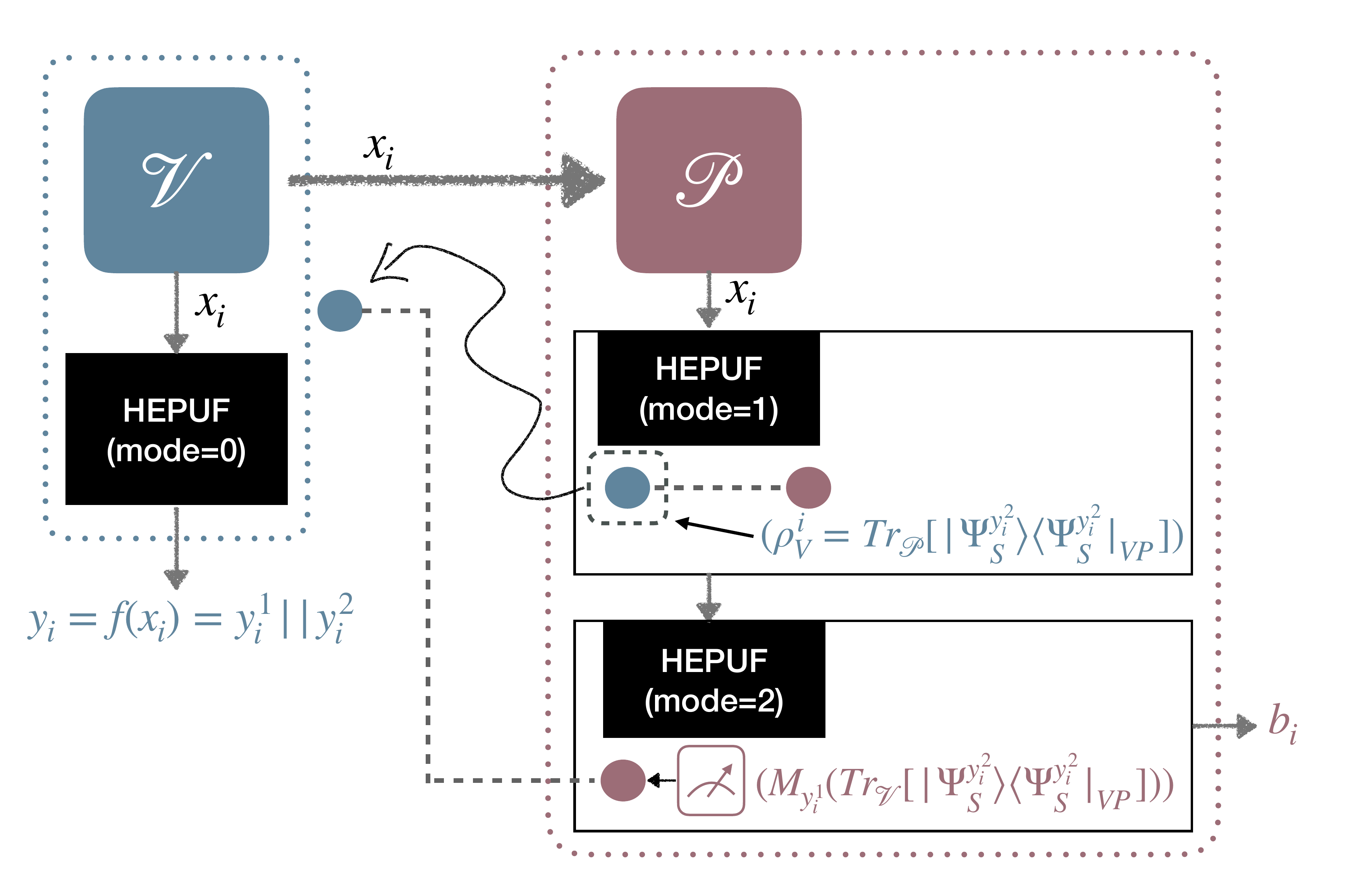}}
    \caption{\footnotesize{Schematic for the online protocol with HEPUF.}}
    \label{on_schematic}
\end{figure}

Note that, unlike the offline case, this protocol does not involve any trusted source of pre-shared entanglement. Here, the security of the protocol relies on the properties of quantum state discrimination as the HEPUF encodes the response corresponding to each challenge to a quantum state according to the Def.~\ref{hepuf}. Here, the security proof of this authentication protocol is developed based on the observation made in Lemma~\ref{lem_LI}. 

\noindent \emph{Specification of the Construction.} This protocol uses a specific instantiation of the HEPUF as defined in Def.~\ref{hepuf}. First, the outcome $y = f(x) = y^1||y^2$ is concatenated equally. Every bit of $y^2$, as mentioned before, the HEPUF outputs either of the states, and a projective measurement is performed depending on the value of $y^1$. It can be easily seen, a similar protocol can be designed with the other two Bell states $\{ \ket{\Phi^-}, \ket{\Psi^+} \}$. In the following part of the paper, we discuss the completeness and security of the online protocol involving $\{ \ket{\Phi^+}, \ket{\Psi^-}\}$. The same results follow for the other two states. 

\begin{theorem}
    Protocol \ref{online_prot_2Bell} satisfies completeness.
\end{theorem}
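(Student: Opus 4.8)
\emph{Proof proposal.} The plan is to propagate the joint state of the honest parties' registers through a single round of Protocol~\ref{online_prot_2Bell} and verify that, with no adversary on the channel, each bit of the \textit{Verifier}'s check succeeds deterministically. Fix a challenge $x$ and set $y = f(x) = y^1 || y^2$. Both parties' hardware derives from one and the same underlying CPUF: the \textit{Verifier} queried it in $\mode = 0$ during setup to populate the database of $(x,y)$ pairs, while the honest \textit{Prover} queries the HEPUF in $\mode = 1$ and $\mode = 2$ during authentication. The $\delta_1$-robustness requirement then ensures that the string $y$ the \textit{Prover} uses to encode and measure coincides with the one stored in the \textit{Verifier}'s database, except with probability negligible in $\lambda$; conditioning on this event, I may treat $y$ — and hence both $y^1$ and $y^2$ — as common to the two honest parties. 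So the first move reduces completeness to the idealized case of a consistent $y$ together with perfectly prepared bipartite states.

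The heart of the argument is then a direct reading of the two basis decompositions of $\ket{\Phi^+}$ and $\ket{\Psi^-}$ recorded in Eq.~(\ref{set}). For the $i$-th of the $k$ encoded positions the HEPUF prepares $\ket{\Psi^{y^2_i}_S}_{VP}$, ships the $V$-qubit to the \textit{Verifier} over the quantum channel, and measures the $P$-qubit in the computational basis if $y^1_i = 0$ and in the Hadamard basis if $y^1_i = 1$, yielding outcome $b_i$. The point is that in \emph{both} of these conjugate bases $\ket{\Phi^+}$ is perfectly correlated and $\ket{\Psi^-}$ is perfectly anti-correlated — in the Hadamard basis under the identification $\ket{+}\leftrightarrow 0$, $\ket{-}\leftrightarrow 1$, with the inconsequential global sign of $\ket{\Psi^-}$ absorbed. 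Hence, immediately after the \textit{Prover}'s measurement, the \textit{Verifier}'s qubit sits in the pure state of that same basis carrying label $b_i \oplus y^2_i$. Because the \textit{Verifier} reads $y^1_i$ from the database and measures the received qubit in precisely the matching basis, their outcome $b'_i$ is deterministic and equals $b_i \oplus y^2_i$.

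It remains to feed this into the verification rule. The \textit{Verifier} receives the announced string $b = b_1 \cdots b_k$ over the classical channel and, holding $y^2$ from the database, checks the bitwise relation $b'_i \oplus b_i = y^2_i$ for every $i$ (equivalently, reconstructs $y^2$ from $b'$ and $b$ and compares it with the database entry). By the previous step each such equality holds with certainty, so $\Prb[\V \text{ accepts}_{\Pp}] = 1$ in the idealized model, and combining with the robustness reduction of the first step gives honest acceptance probability $1 - negl(\lambda)$ over a full protocol execution — i.e.\ completeness. The one place demanding care is the bookkeeping: one must check explicitly that the correlation structure of the chosen Bell pair is the same in both conjugate bases (this is exactly where the $-$ sign of $\ket{\Psi^-}$ and the Hadamard-basis labelling enter), and that it is $y^1$, not $y^2$, that dictates the measurement basis on both sides while $y^2$ merely supplies the correlation/anti-correlation flip; once these conventions are pinned down the claim is immediate. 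It is worth noting that this is the benign counterpart of Lemma~\ref{lem_LI}: the honest parties decode perfectly precisely because they share $y^1$ and thus know the measurement basis, which is exactly the information an intercepting adversary lacks.
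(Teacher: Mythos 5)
Your proposal is correct and follows essentially the same route as the paper's proof: both rest on the observation that $\ket{\Phi^+}$ yields perfectly correlated and $\ket{\Psi^-}$ perfectly anti-correlated outcomes in both conjugate bases, so the verification rule in Protocol~\ref{online_prot_2Bell} is satisfied with certainty by an honest \textit{Prover}. Your extra bookkeeping (the $\delta_1$-robustness reduction to a common $y$, and the explicit check of the Hadamard-basis labelling and the sign of $\ket{\Psi^-}$) only makes explicit what the paper leaves implicit.
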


\begin{proof}
    For completeness, similar to the offline protocol, we show that the success probability of an honest \textit{Prover} in the absence of an adversary over the channel is overwhelmingly close to one. The protocol is such that, when a part of the response tuple, i.e. $y^2_i=0$, the measurement outcomes of $\V$ and $\Pp$ are perfectly correlated, i.e. $\forall j: a^j = b^j$. And when $y^2_i=1$, the measurement outcomes of $\V$ and $\Pp$ are perfectly anti-correlated, i.e. $\forall j: a^j \neq b^j$. As the verification of $\V$ only relies on this equality (or inequality) testing, we have:
    \begin{equation}
        \Prb[\V \text{ accept}_{\Pp}] = \Prb[\texttt{acc} \leftarrow \V(a^j) | \text{no Adv}: \forall j] = 1
        \label{eq:p1-completness}
    \end{equation}
    Hence the claim.
\end{proof}

\begin{protocol}
    \begin{minipage}{0.45\textwidth}
    \justifying
    \noindent The protocol runs between two parties, a \textit{Verifier} $\V$ and a \textit{Prover} $\Pp$. The protocol includes an HEPUF construction, according to Definition~\ref{hepuf}, with the set $\{ \ket{\Phi^+}, \ket{\Psi^-}$.

    \begin{enumerate}
        \item \textbf{Setup phase:}
        \begin{enumerate}
             \item $\V$ has access to the HEPUF in $\mode=0$. $\V$ constructs a database of pairs $\mathtt{DB}:= \{(x_i,y_i)\}_{i=1}^{d}$ where $y_i = f(x_i)= y_i^1||y_i^2$. $\V$ sets the $\mode=1$ and send the device to $\Pp$.
        \end{enumerate}
        \item \textbf{Authentication phase:}
        \begin{enumerate}
            \item \label{a} $\V$ chooses a pair $(x_i,y_i) \in \mathtt{DB}$ uniformly at random (per round). Then $\V$ sends $x_i$ to $\Pp$ through a classical public channel. 
            \item \label{b} $\Pp$ interacts with the HEPUF in $\mode=1$, querying it with input $x_i$. The HEPUF construction internally produces $y_i=f(x_i)=y_i^1||y_i^2$, performs the encoding $\{ (y_i^2=0)\rightarrow \ket{\Phi^+}\}$, and $\{(y_i^2=1)\rightarrow \ket{\Psi^-}\}$, and outputs verifier's part of the encoded state $\rho^i_V$.
            \item \label{c} $\Pp$ sends $\rho^i_V$ to $\V$ over the quantum channel.
            \item \label{d} $\Pp$ sets the HEPUF to $\mode=2$, resulting in measuring the local subsystem according to $y_i^1$ (If $y_i^1 =0$, measure Z basis, else if $y_i^1 =1$, measure in Hadamard basis). $\Pp$ also receives from HEPUF the corresponding outcome of each measurement $b = b^0,\dots,b^j,\dots b^m$.
            \item \label{e} $\Pp$ sends $b$ to $\V$ over the channel. 
        \end{enumerate}
        \item \textbf{Verification phase:}
        \begin{enumerate}
            \item \label{f} $\V$ measures their local qubit, in the basis of $y_i^1$ from $\mathtt{DB}$, and obtains the measurement outcome string $a = a^0,\dots,a^j,\dots a^m$. \\
            $\texttt{For j=0 to j=m}$:\\
            $\texttt{If}$ $y^2_i = 0$, $\texttt{AND}$ $a^j = b^j$, $\texttt{OR}$\\
            $\texttt{If}$ $y^2_i = 1$, $\texttt{AND}$ $a^j \oplus b^j = 1$\\
            $\texttt{Then}$ $\V$ accepts. Otherwise $\V$ rejects.
        \end{enumerate}
    \end{enumerate}
    \end{minipage}
    \caption{Online protocol with Bell states}
    \label{online_prot_2Bell}
\end{protocol}

\begin{theorem}
    Protocol \ref{online_prot_2Bell} satisfies security against QPT adversaries and unbounded quantum eavesdroppers (during authentication).
\end{theorem}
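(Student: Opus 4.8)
The plan is to mirror the two-step structure of the offline-protocol proofs: first bound the forgery probability on a single encoded bit, and then amplify over the $m$ encoded bits that make up one authentication round, using the near-iid structure of the CPUF output together with a min-entropy accumulation step exactly as in the imperfect-Bell-pair theorem above. By the universal-unforgeability convention (Def.~\ref{def:sec1}) it suffices to take the challenge $x$ uniformly at random from $\mathtt{DB}$; since the adversary $\A$ is QPT in the setup phase and the underlying CPUF is, by assumption, not fully broken, with overwhelming probability $x$ is a challenge whose response $y = y^1||y^2$ the adversary can predict no better than with its residual bias.

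The core of the argument is a ``no-information on the channel'' claim: for \emph{any} channel adversary $\A$ (in particular an unbounded one during the authentication phase), conditioning on $\A$'s view --- its setup database, the intercepted quantum register $\rho^i_V$, and the announced string $b$ --- does not decrease the min-entropy of $y^2_j$, nor reveal $y^1_j$. This is exactly where local indistinguishability enters. The $\mode=1$ output of the HEPUF is $\rho^i_V = \Tr_P[\ket{\Psi^{y^2_i}_S}\bra{\Psi^{y^2_i}_S}]$, the maximally mixed qubit for \emph{both} elements of $S = \{\ket{\Phi^+},\ket{\Psi^-}\}$, so the qubit travelling over the channel carries no information at all about $y^2_i$; and by Lemma~\ref{lem_LI} no measurement in the computational or Hadamard basis --- the only bases available to an $\A$ that does not know $y^1_i$ --- can distinguish the two encodings. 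Symmetrically, in an honest round the announcement $b$ is the outcome of measuring the maximally mixed $\rho^i_P$, hence a uniformly random string whose distribution is independent of the basis $y^1_i$, so it leaks nothing by itself. The only remaining way for $\A$ to gather response data is an active man-in-the-middle against the honest $\Pp$ (impersonating $\V$); but such a query costs $\A$ a rejected round, and because $\mode$ is monotone the HEPUF cannot be rewound, so the joint pair $(\rho^i_V,b)$ is obtained only once per challenge and is worth at most a single CRP --- hence polynomially many such interactions still leave a freshly sampled challenge unpredictable, by the weak-PUF security inherited from~\cite{chakraborty23}.

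Granting this, the per-bit bound is short. To forge the $j$-th bit, $\A$ must hand $\V$ some qubit $\tilde\rho_V$ (possibly part of a larger state $\A$ keeps, which is useless since no further message from $\V$ reaches $\A$ before verification) together with a classical bit $\tilde b$; $\V$ then measures $\tilde\rho_V$ in the basis fixed by $y^1_j$, obtains $a^j$, and accepts iff $a^j \oplus \tilde b = y^2_j$. Even if $\A$ were handed $y^1_j$ for free, the best it can do is send an eigenstate of that basis so that $a^j$ is deterministic and then guess $y^2_j$, so
\begin{equation}
\Prb[\V \text{ accept}_{\A}^{(j)}] = \Prb[a^j \oplus \tilde b = y^2_j] \le \tfrac{1}{2} + \delta + negl(\lambda),
\end{equation}
where $\delta < 1/2$ is a constant bounding the residual bias of the CPUF bit. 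Combining the per-bit bounds over the $m$ encoded bits of a round as in the imperfect-Bell-pair theorem --- using that the CPUF outputs are near-iid and the min-entropy accumulation inequality --- yields
\begin{equation}
\Prb[\V \text{ accept}_{\A}] \le \left(\tfrac{1}{2} + \delta + negl(\lambda)\right)^{m} = negl(\lambda)
\end{equation}
for polynomial $m$; since the channel part of the no-information claim required no computational assumption, the same bound holds against unbounded eavesdroppers active during authentication, which completes both parts of the statement.

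I expect the main obstacle to be making the ``no-information on the channel'' claim rigorous against an \emph{adaptive} adversary. The passive part --- maximal mixedness of $\rho^i_V$, uniformity of $b$, and Lemma~\ref{lem_LI} --- is clean, but carefully bounding what an active man-in-the-middle can accumulate about the CPUF across many adaptively chosen authentication rounds, and formalising that the monotone-mode HEPUF forces each such interaction to be ``worth at most one CRP'', requires pinning down the precise query model and the exact residual weak-PUF assumption, in a manner analogous to but more delicate than the non-adaptive treatment of~\cite{chakraborty23}. A secondary point to get right is quantifying the per-bit bound over $\tilde\rho_V$ being entangled with $\A$'s memory; this is handled by observing that $\V$ sends nothing back before accepting or rejecting, so $\A$'s ancilla can be traced out without loss.
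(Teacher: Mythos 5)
Your proposal is correct in its overall architecture and lands on the same two pillars as the paper's proof: (i) a ``no information on the channel'' argument based on the fact that $\rho^i_V$ is maximally mixed for both elements of $S$ and on Lemma~\ref{lem_LI}, so that neither the transmitted qubit nor the announcement $b$ boosts the adversary's knowledge of the response beyond the CPUF bias; and (ii) a per-bit forgery bound amplified multiplicatively over the $m$ encoded bits. Where you genuinely diverge is in how the per-bit bound is obtained. The paper reduces one bit of the protocol to an explicit non-interactive game in which the adversary submits a qubit $\rho$ and a bit $\tilde b$, and then carries out a constrained Bloch-sphere optimisation over $(r_x,r_z,r'_x,r'_z,q_0)$ to find the \emph{exact} optimal cheating probability $\tfrac{1}{2}+\delta\sqrt{(1+4\delta^2)/2}$; it also verifies separately, via an explicit Helstrom-bound computation on the post-measurement states $\rho'_0,\dots,\rho'_3$, that an adversary who queries the Prover and hears the announcement recovers $y^1$ with probability exactly $\tfrac{1}{2}+\delta$, i.e.\ no better than the prior. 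You instead short-circuit the optimisation by handing $y^1_j$ to the adversary for free, which correctly yields the looser but still exponentially decaying bound $(\tfrac{1}{2}+\delta)^m$; note that $\delta\sqrt{(1+4\delta^2)/2}\le\delta$ for $\delta<\tfrac12$, so your bound is strictly weaker than the paper's headline figure of merit (the quantity reported in Table~\ref{comparison}), though entirely sufficient to establish the theorem as stated. Your approach buys simplicity and robustness (no optimisation to get wrong); the paper's buys the tight constant, which matters for the quantitative comparison with prior hybrid schemes. Your closing caveats are well placed: the adaptive, setup-phase part of the argument (bounding what a man-in-the-middle accumulates across rounds, and the ``one interaction is worth one CRP'' claim) is treated at essentially the same level of informality in the paper itself, so this is not a gap relative to the paper, but it is the part of both arguments that would need the most work to make fully rigorous.
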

\begin{proof}
Our proof strategy is as follows. First, we perform the security analysis per round, against a quantum eavesdropper or an unbounded network adversary, without any extra information on the HEPUF. Then we allow the adversary to gather the information of HEPUF over rounds, and we show that the success probability using this extra information can be boosted by at most a negligible quantity.
For the first part, we need to show the following:
\begin{equation}
\begin{split}
    \Prb[\V \text{ accept}_{\A}] = \Prb[\texttt{acc} \leftarrow \texttt{ver}(\rho^j_{\A}, b^j_a) : (\rho^j_{\A}, b^j_a) \leftarrow \A, \forall j]
\end{split}
    \label{eq:p2-sec-1}
\end{equation}
Since the \textit{Verifier} expects a state and a classical outcome as inputs to run their verification algorithm and authenticate the \textit{Prover}, an adversary succeeds if they can produce such a tuple $(\rho^j_{\A}, b^j_a)$, per bit that successfully passes the verification with non-negligible probability. Due to the bit-independence assumption of the underlying classical PUF, we give the analysis for each bit, and the extension to $m$-bit outcome is straightforward. Also, for simplicity of the proof, we introduce a non-interactive quantum game, such that its winning strategy directly reduces to our protocol. The game consists of two parties, Alice and Bob and proceeds as follows, Alice flips two independent biased coins, $c_1$ and $c_2$, both with similar bias probabilities, $\Prb[c_1 = 0] = \Prb[c_2 = 0] = p_0 = 1/2 + \delta$ (and $p_1 = 1 - p$). Bob plays the game against Alice and needs to send simultaneously a quantum state $\rho$ and a classical bit $b$. Alice, upon receiving the state, measures it in $Z$-basis if $c_1 = 0$, and in $X$-basis if $c_1 = 1$. In either of the cases, Alice obtains the binary outcome $a$. Bob wins if,
\begin{equation}
\begin{split}
    & \texttt{if } c_2 = 0: a = b \text{ (correlated outcomes)};\\
    & \texttt{if } c_2 = 1: a \neq b \text{ (anti-correlated outcomes)}.\\
\end{split}
    \label{eq:p2-bob-win-cond}
\end{equation}
Our goal is to find the optimal winning strategy for Bob and the corresponding probability. This consists of Bob choosing bit $b$ to be $0$ or $1$ with probabilities $q_0$ and $q_1$ respectively, along with choosing a state $\rho_0$ for $b=0$, and $\rho_1$ otherwise. We find the best choice for these quantities to maximise the winning probability.\\

A general single-qubit density matrix can be written as, $\rho = \frac{1}{2}(I + r_x\sigma_x + r_y\sigma_y + r_z\sigma_z)$ where $r = (r_x, r_y, r_z)$ is the vector on the Bloch sphere. First, since Alice's bit is only dependent on $c_1$, and $c_1$ is an independent coin, we can calculate the probability of Alice receiving each of the outcomes (according to Born's rule), measured in a basis related to $c_1$,
\begin{equation}
\begin{split}
    & P_Z(0) = \Prb[a = 0 | c_1 = 0] = \frac{1 + r_z}{2}\\
    & P_Z(1) = \Prb[a = 1 | c_1 = 0] = \frac{1 - r_z}{2}\\
    & P_X(0) = \Prb[a = 0 | c_1 = 1] = \frac{1 + r_x}{2}\\
    & P_X(1) = \Prb[a = 1 | c_1 = 1] = \frac{1 - r_x}{2}\\
\end{split}
    \label{eq:p2-alice-probs}
\end{equation}
So for each of the density matrices $\rho_0$ and $\rho_1$, we have the measurement probabilities in terms of the vectors $(r_x, 0, r_z)$ and $(r'_x, 0, r'_y)$ (as the $y$ component does not contribute to the probability). The overall winning probability of Bob in this game can be written as follows,
\begin{equation}
\begin{split}
    \Prb_{win} & = \Prb[a = b|c_2 = 0] + \Prb[a \neq b|c_2 = 1]\\
    & = \Prb[a = b] \Prb[c_2 = 0] + \Prb[a \neq b] \Prb[c_2 = 1]\\
    %& = \Prb[a = b] p_0 + \Prb[a \neq b] p_1\\
    & = p_0(\Prb[a = b, c_1 = 0] + \Prb[a = b, c_1 = 1])\\
    & + p_1(\Prb[a \neq b, c_1 = 0] + \Prb[a \neq b, c_1 = 1])
\end{split}
    \label{eq:p2-bob-win-probs}
\end{equation}
We now separate the cases for $a=b$, and for $a\neq b$. For the first one, we have two cases:
\begin{equation}
\begin{split}
    \Prb[a = b, c_1 = 0] & = \Prb[c_1 = 0]\times \sum_b q_b \Prb[a = b | c_1 = 0, \rho_b, b]\\
    & = p_0 (q_0\times P^0_Z(0) + q_1 \times P^1_Z(1))
\end{split}
    \label{eq:p2-bob-win-aeqb-0}
\end{equation}
and similarly for $c_1=1$ we have
\begin{equation}
    \Prb[a = b, c_1 = 1] = p_1 (q_0\times P^0_X(0) + q_1 \times P^1_X(1))
    \label{eq:p2-bob-win-aeqb-1}
\end{equation}
and similarly for both cases of $a\neq b$:
\begin{equation}
\begin{split}
    & \Prb[a \neq b, c_1 = 0] = p_0 (q_0\times P^0_Z(1) + q_1 \times P^1_Z(0))\\
    & \Prb[a \neq b, c_1 = 1] = p_1 (q_0\times P^0_X(1) + q_1 \times P^1_X(0))
\end{split}
    \label{eq:p2-bob-win-aneqb}
\end{equation}
Substituting all these in Eq.~\ref{eq:p2-bob-win-probs} and putting rearranging we have:
\begin{equation}
\begin{split}
        \Prb_{win} = \frac{1}{2}& \times [(\frac{1}{2} + \delta)^2 \{1 - r'_z + q_0 (r_z + r'_z)\}\\
        & + (\frac{1}{2} + \delta)(\frac{1}{2} - \delta)\{2 + (r'_z - r'_x) \\
        & + q_0(r'_x - r'_z + r_x - r_z)\}\\
        & + (\frac{1}{2} - \delta)^2 \{1 + r'_x - q_0 (r_x + r'_x)\}]
\end{split}
    \label{eq:p2-bob-win-final}
\end{equation}
We need to maximise this function over all $5$ parameters $\{r_x,r_z,r'_x,r'_z, q_0\}$, given the constraints of $|r_x| \leq 1$, $|r_z|\leq 1$, $|r'_x|\leq 1$, $|r'_z| \leq 1$, and $r_x^2 + r_z^2 \leq 1$, ${r'_x}^2 + {r'_z}^2 \leq 1$, and finally, $0 \leq q_0 \leq 1$. We first maximise this equation in terms of $q_0$. It can be seen that the optimal choice is to pick $q_0$ based on $\max\{r_z,r_x\}$, i.e $q_0 = 1$ if $r_z > r_x$ and $q_0 = 0$ if $r_z < r_x$. Here we omit the proof of this constraint optimisation (which we have done in the polar coordinates) and we only give the result. We conclude that the above function does not have a critical point in the desired boundaries, and given all the constraints, since the optimal solution for $q_0$ (which is either 0 or 1) and the solution for other parameters cannot be simultaneously satisfied. In fact, by doing so, we have managed to prove that the probabilistic strategy of picking the appropriate density matrix with probabilities $q_0$ and $q_1$ does not admit the optimal strategy. Thus, we can set the probability $q_0$ to one of the extreme points (Without loss of generality, we pick $q_0 = 1$), and select an optimal density matrix, according to that deterministically. \\
Assuming the deterministic strategy and picking the $q_0 = 1$, we can rewrite the winning probability as follows,
\begin{equation}
\begin{split}
        \Prb_{win} = \frac{1}{2}& \times [(\frac{1}{2} + \delta)^2 (1 + r_z)\\
        & + (\frac{1}{2} + \delta)(\frac{1}{2} - \delta)(2 + r_x - r_z)\\
        & + (\frac{1}{2} - \delta)^2 (1 - r_x)]
\end{split}
    \label{eq:p2-bob-win-final}
\end{equation}
Now we optimise this function again over the choice of the density matrix, i.e. over $r_x$ and $r_z$, given all the Bloch sphere constraints above. The following parameters maximise the function:
\begin{equation}
\begin{split}
        r^{opt}_x = \frac{1 - 2\delta}{\sqrt{2 + 8\delta^2}}, \quad r^{opt}_z = \frac{1 + 2\delta}{\sqrt{2 + 8\delta^2}}
\end{split}
    \label{eq:p2-optimal-vector}
\end{equation}
so the optimal density matrix of Bob is
\begin{equation}
    \rho^{opt} = \frac{1}{2}(I + r^{opt}_x \sigma_x + r^{opt}_z \sigma_z)
    \label{eq:p2-opt-density}
\end{equation}
Which gives the final optimal probability:
\begin{equation}
    \Prb_{win} = \frac{1}{2} + \delta \sqrt{\frac{1 + 4\delta^2}{2}}
    \label{eq:p2-Bob-win-opt}
\end{equation}
Giving the extreme cases of $\delta = 0 \rightarrow \Prb_{win} = 1/2$, and $\delta = 1/2\rightarrow \Prb_{win} = 1$ as expected.\\
Without any extra information, we claim that the success probability of any unbounded channel adversary in passing the verification of Protocol~\ref{online_prot_2Bell} is the same as the success probability of Bob in winning this game. To complete the proof, we need to show that there is a bound on the amount of information the adversary can get through every pair of state and classical bit, meaning that an overall QPT adversary can only boost the probability of guessing $y^1$ and $y^2$ by a negligible factor. 

First, we use the local-indistinguishability property of the state to show that, through the state alone, the adversary does not learn any extra information about the outcome bit of the PUF.

Let $p_{guess}$ be the guessing probability of $y^2$  based on a single copy of the reduced density matrix $\rho(y^2)$ that the adversary receives.  
    \begin{equation}
        \rho(y^2) = \tr_{\Pp}[\ket{\psi(y^2))} \bra{\psi(y^2))}]
    \end{equation}
with $\ket{\psi(y^2))} \in \{ \ket{\Phi^+},\ket{\Psi^-}\}$ according to the construction. As the ensemble of the state depends only on $y^2$, the choice of measurement basis depends only on $y^1$, prior to any announcement by $\Pp$, and only by disrupting the quantum channel, density matrices corresponding to $y^{j=1}=0$ and $y^{j=1}=1$, from the point of view of the adversary, are:
    \begin{eqnarray}
    \rho_0 &&=\rho(y^2)|_{y^1=0} \nonumber \\
    &&= (\frac{1}{2} + \delta)(\tr_{\Pp} [\ket{\Phi^+}\bra{\Phi^+}])+(\frac{1}{2} - \delta)(\tr_{\Pp} [\ket{\Psi^-}\bra{\Psi^-}]) \nonumber\\
    &&= \frac{\openone}{2}. 
    \end{eqnarray}
    and, 
    \begin{eqnarray}
    \rho_1 &&=\rho(y^2)|_{y^1=1}\nonumber\\
    && = (\frac{1}{2} + \delta)(\tr_{\Pp} [\ket{\Phi^+}\bra{\Phi^+}])+(\frac{1}{2} - \delta)(\tr_{\Pp} [\ket{\Psi^-}\bra{\Psi^-}]) \nonumber\\
    && =\frac{\openone}{2}.
    \end{eqnarray}
Note that, in both cases, the state is a maximally mixed state, leading to the guessing probability of $y^2$ for the adversary being,
\begin{equation}
    p^{on}_{guess}(y^2) \coloneq \Prb [\mathcal{A}^{j}_{guess}(x, \rho(y^2))=y^2] = \frac{1}{2}.
\end{equation}
As all the local states are identical and maximally mixed, the probability of identifying them is no better than a random guess. \cite{helstrom69, ivanovic87, peres98}. We conclude that the adversary does not learn any extra information about the state to boost their success probability.
    
Next, we consider that the adversary obtains the local state along with the classical information to capture all the possible strategies of the adversary, given all the existing information on the channel. For that, we need to consider the following types of attacks where $\A$ can query $\Pp$ with their choice of $x_i$, pretending to be the \textit{Verifier} and then wait for the \textit{Prover} to perform the measurement and announce the classical outcome. As the choice of measurement basis solely depends on the value of $y^1$, and we have shown that the local states (even given the bias) are maximally mixed, we only consider the probability of $\A$ guessing $y^1$ through this attack, as follows:
    \begin{equation}
        p^{on}_{guess}(y^1) \coloneq \Prb [\mathcal{A}^{j}_{guess}(x, \rho(y^2))=y^1].
    \end{equation}
    When $y^1=0$, the measurement performed by the HEPUF is in the computational basis. In this case, we denote the projectors by $\Pi_0=\ket{0}\bra{0}$ and $\Pi_1=\ket{1}\bra{1}$. The probability of obtaining the outcome $``0"$ from $\Pi_0$ is given by, $p(\Pi_0)=\tr[(\openone \otimes\Pi_0) ((1/2 + \delta) \ket{\Phi^+}\bra{\Phi^+}+(1/2 - \delta) \ket{\Psi^-}\bra{\Psi^-})] = 1/2$. Similarly, the probability of obtaining the outcome $``1"$ is given by $p(\Pi_1)=1/2$. When the projector $\Pi_0$ clicks, the announced outcome is $``0"$, and the local state on the adversary's side collapses to,
    \begin{eqnarray}
        \rho'_0 &&= \frac{1}{p(\Pi_0)} \tr_{\Pp} [(\openone \otimes \Pi_0) ((1/2 + \delta)\ket{\Phi^+}\bra{\Phi^+} \nonumber\\
        && \quad \quad \quad \quad \quad \quad +(1/2 - \delta)\ket{\Psi^-}\bra{\Psi^-})(\openone \otimes \Pi_0)] \nonumber\\
        && = 2 \tr_{\Pp} [\frac{(1/2 + \delta)}{2} \ket{00}\bra{00}+\frac{(1/2 - \delta)}{2}\ket{10}\bra{10}]\nonumber\\
        && = (1/2 + \delta) \ket{0}\bra{0}+(1/2 - \delta) \ket{1}\bra{1}.
    \end{eqnarray}
    And when the outcome is $``1"$ corresponding to the projector $\Pi_1$, the post-measurement state at the adversary's side becomes,
    \begin{eqnarray}
        \rho'_1 &&= \frac{1}{p(\Pi_1)} \tr_{\Pp} [(\openone \otimes \Pi_1) ((1/2 + \delta)\ket{\Phi^+}\bra{\Phi^+} \nonumber\\
        && \quad \quad \quad \quad \quad \quad +(1/2 - \delta)\ket{\Psi^-}\bra{\Psi^-})(\openone \otimes \Pi_1)] \nonumber\\
        && = (1/2 + \delta) \ket{1}\bra{1} + (1/2 - \delta) \ket{0}\bra{0}.
    \end{eqnarray}

Similarly, when $y^1=1$, the measurement is performed in the Hadamard basis with the projectors $\Pi'_0=\ket{+}\bra{+}$ and $\Pi'_1=\ket{-}\bra{-}$. In this scenario, the probabilities for obtaining outcomes $b=0$ and $b=1$ corresponding to the projectors $\Pi'_0$ and $\Pi'_1$ respectively are $p(\Pi'_0)=p(\Pi'_1)=1/2$. Due to this measurement, the pose-measurement state on the adversary's side collapses to either of the following with equal probability,
\begin{equation}
    \begin{split}
        & \rho'_2 = (1/2 + \delta) \ket{+}\bra{+}+ (1/2 - \delta) \ket{-}\bra{-} \\
        & \rho'_3 = (1/2 + \delta) \ket{-}\bra{-}+ (1/2 - \delta) \ket{+}\bra{+}
    \end{split}
\end{equation}

\begin{figure}[htp]
\centering
\fbox{
\subfigure[Success probability of the adversary per round with respect to the bias in the CPUF.]{\includegraphics[scale=0.25]{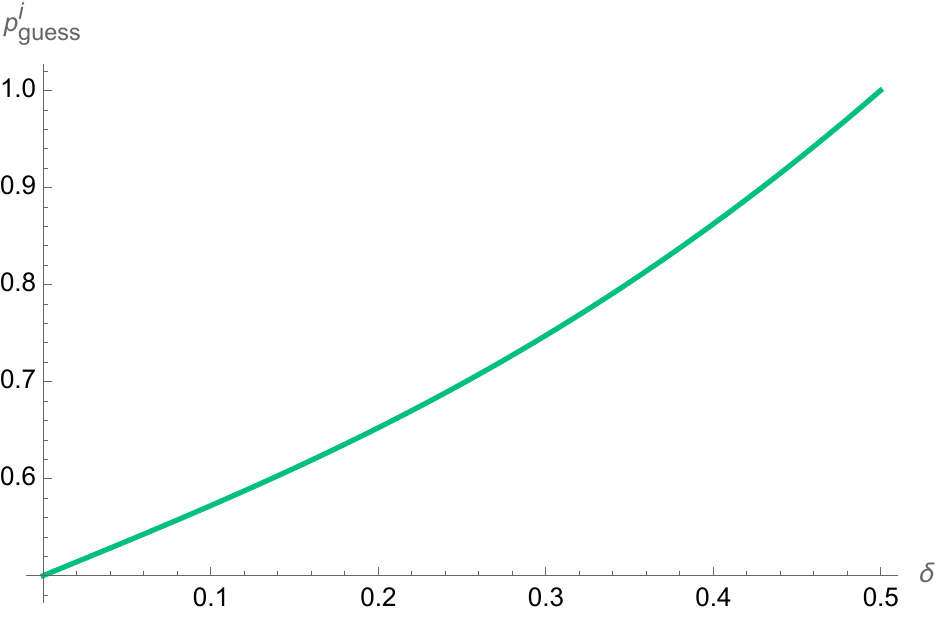}}
%\hfill
\quad
\subfigure[Success probability of the adversary after $m$ rounds for given bias in the CPUF.]{\includegraphics[scale=0.25]{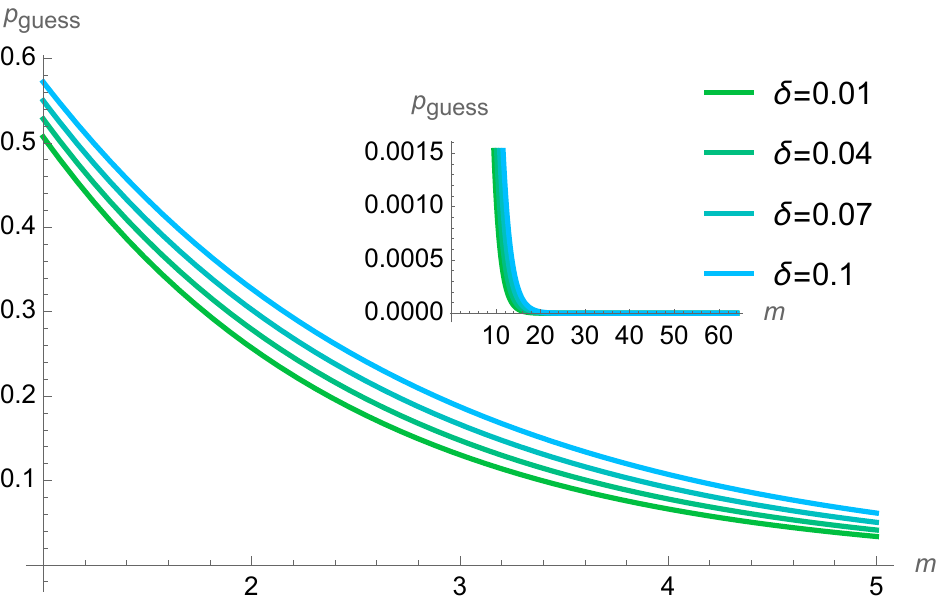}}}
\caption{\footnotesize{The success probability of the adversary to extract the database per round and after $m$ rounds. All the axes are unitless.}}
\label{prob_on}
\end{figure}

\begin{table*}
    \setlength{\tabcolsep}{5pt}
    \renewcommand{\arraystretch}{1.5}
    \newcolumntype{C}[1]{>{\centering\arraybackslash}p{#1}} % Define a centered column type
    \begin{tabular}{|C{3cm}|C{2cm}|C{2cm}|C{1.5cm}|C{1.7cm}|C{5cm}|}
         \hline
         The protocol & Underlying primitive & Quantum communication & Memory & Trusted resource & Security parameter (after $m$ rounds) \\
         \hline \hline
         Ref.~\cite{arapinis21} & QPUF & $2$-way & Quantum & -- & $(\frac{1}{2})^m$ \\
         \hline
         Ref.~\cite{chakraborty23} & HPUF & $2$-way & Classical & -- & $p^{cl}_{f} ((\frac{1}{2}+\delta)(1+\sqrt{2} (\frac{1}{2}+\delta)))^{2m\times poly(m)}$ Th.45 of~\cite{doosti2022unclonability} \\
         \hline
         %\rowcolor[rgb]{0.8, 0.95, 0.95} 
         This work (offline protocol) & CPUF & -- & Classical & Bell states & $(\frac{1}{2})^m$ \\
         \hline
         %\rowcolor[rgb]{0.8, 0.9, 0.95} 
         This work (online protocol) & HEPUF & $1$-way ($\Pp~\rightarrow~\V$) & Classical & -- & $\left(\frac{1}{2} + \delta \sqrt{\frac{1 + 4\delta^2}{2}}\right)^m$ \\
         \hline
    \end{tabular}
    \caption{Comparison between the existing hardware-based authentication protocols. In the second row, $p^{cl}_{f}$ denotes the forging probability of a classical PUF given $q = poly(m)$ queries.}
    \label{comparison}
\end{table*}

When the announcement by the \textit{Prover} is $b=0$, the local state of the adversary is either of $\rho'_0$ or $\rho'_2$, and when the announcement is $b=1$, the local state is either of $\rho'_1$ or $\rho'_3$. Hence, for a given announcement, say $b=0$, the correct guessing probability of $y^1$ is essentially the success probability of distinguishing between the states $\rho'_0$ and $\rho'_2$. This is given by the Helstrom bound \cite{helstrom69} as follows,
\begin{equation}
    \begin{split}
        p_{guess}^{on(b=0)}(y^1) = & \frac{1}{2}+\frac{1}{2}||p \rho'_0-(1-p)\rho'_2||_1\\
        & \frac{1}{2} + \frac{\delta}{2}(|1 + \sqrt{\frac{1 + 4\delta^2}{2}}| + |1 - \sqrt{\frac{1 + 4\delta^2}{2}}|)\\
        %& \frac{1}{2} + \frac{\delta}{2}(2)\\
        & \frac{1}{2} + \delta
    \end{split}
\end{equation}
with $0 \leq \delta \leq 1/2$. Similarly, for $b=1$ the adversary needs to distinguish between the states $\rho'_1$ and $\rho'_3$. A simple calculation shows $p_{guess}^{on(b=1)}(y^1)=p_{guess}^{on(b=0)}(y^1)= \frac{1}{2} + \delta$. This recovers the initial guessing probability of $y^1$ given the bias $\delta$, and $\A$ cannot get any extra information through this type of attack to boost their success probability from the one in Eq.~\ref{eq:p2-Bob-win-opt}. We can now write the final optimal success probability of $\A$, for an $m$-bit outcome PUF, as
\begin{equation}
    \Prb[\V \text{ accept}_{\A}] = \left(\frac{1}{2} + \delta \sqrt{\frac{1 + 4\delta^2}{2}}\right)^m \approx negl(m).
\end{equation}
This concludes the proof.
\end{proof}
The exact variation of the optimal success probability for the adversary per round can by found in Fig.~\ref{prob_on}(a) and after $m$ rounds in Fig.~\ref{prob_on}(b).

%%%%%%%%%%%%%%%%%%%%%%%%%%%%%%%%%%%%%%%%%%%%%%%%%%%
\section{Discussion and Future Directions}
One of the key challenges that makes any cryptographic protocol vulnerable to external attacks is the authentication of the parties involved in the process~\cite{Ling25}. We design an entanglement-based hybrid authentication protocols for secure quantum communication. This is the first entangled version of a hardware-based hybrid communication protocol. Our construction includes both offline and online authentication schemes, using two-qubit maximally entangled states.

In the offline protocol, we propose a fast authentication scheme for honest parties, which relies solely on classical communication once the protocol is initiated. This scheme assumes that several entangled states have been pre-shared between the involved parties. These states are distributed by a trusted source well before the authentication begins. We show that the offline authentication protocol is complete and secure against unbounded quantum network adversaries. In our second protocol—the online version - we remove the trusted source and instead rely on a fundamental property of entangled states. By considering a locally indistinguishable (LI) set of states, we construct a new authentication protocol based on what we call a hybrid entangled PUF. In this setting, outputs of a CPUF are encoded into bipartite maximally entangled states chosen from a predefined LI set. These states are indistinguishable under LOCC operations, and security is ensured by the local indistinguishability property, as the parties are limited to LOCC on their respective subsystems.

We show that, using a weak CPUF and assuming it is not fully broken, the online protocol is complete and exponentially secure, even against the strongest adversaries, who are unbounded over the network and QPT overall. We analyse the security scaling of the protocol as a function of the CPUF's bias parameter and demonstrate that even in the presence of a non-negligible bias, the protocol achieves exponential security in a single round.

While our authentication protocol can also be extended to key agreement schemes, we leave the analysis and formal security proof of such an extension for future work. Another desirable feature of our protocol, inspired by~\cite{chakraborty23}, is challenge re-usability. Although we do not provide a formal proof of this property in the current work, we believe it is likely achievable due to the properties of the LI states.

Finally, we believe experimental implementations of these protocols will serve as compelling proof-of-concepts demonstrating the power of hybrid communication schemes. A detailed comparison between existing PUF-based authentication protocols and our results is provided in Table~\ref{comparison}.

\section*{Acknowledgements}
The authors thank Ramin Jafarzadegan and Alexandru Cojocaru for valuable discussions and comments during different stages of this work. All the authors acknowledge the support of the Quantum Advantage Pathfinder (QAP), with grant reference EP/X026167/1, and the UK Engineering and Physical Sciences Research Council. MD and EK also acknowledge the Integrated Quantum Networks Hub, grant reference EP/Z533208/1.

\bibliography{references.bib}

%apsrev4-2.bst 2019-01-14 (MD) hand-edited version of apsrev4-1.bst
%Control: key (0)
%Control: author (8) initials jnrlst
%Control: editor formatted (1) identically to author
%Control: production of article title (0) allowed
%Control: page (0) single
%Control: year (1) truncated
%Control: production of eprint (0) enabled
\begin{thebibliography}{78}%
\makeatletter
\providecommand \@ifxundefined [1]{%
 \@ifx{#1\undefined}
}%
\providecommand \@ifnum [1]{%
 \ifnum #1\expandafter \@firstoftwo
 \else \expandafter \@secondoftwo
 \fi
}%
\providecommand \@ifx [1]{%
 \ifx #1\expandafter \@firstoftwo
 \else \expandafter \@secondoftwo
 \fi
}%
\providecommand \natexlab [1]{#1}%
\providecommand \enquote  [1]{``#1''}%
\providecommand \bibnamefont  [1]{#1}%
\providecommand \bibfnamefont [1]{#1}%
\providecommand \citenamefont [1]{#1}%
\providecommand \href@noop [0]{\@secondoftwo}%
\providecommand \href [0]{\begingroup \@sanitize@url \@href}%
\providecommand \@href[1]{\@@startlink{#1}\@@href}%
\providecommand \@@href[1]{\endgroup#1\@@endlink}%
\providecommand \@sanitize@url [0]{\catcode `\\12\catcode `\$12\catcode `\&12\catcode `\#12\catcode `\^12\catcode `\_12\catcode `\%12\relax}%
\providecommand \@@startlink[1]{}%
\providecommand \@@endlink[0]{}%
\providecommand \url  [0]{\begingroup\@sanitize@url \@url }%
\providecommand \@url [1]{\endgroup\@href {#1}{\urlprefix }}%
\providecommand \urlprefix  [0]{URL }%
\providecommand \Eprint [0]{\href }%
\providecommand \doibase [0]{https://doi.org/}%
\providecommand \selectlanguage [0]{\@gobble}%
\providecommand \bibinfo  [0]{\@secondoftwo}%
\providecommand \bibfield  [0]{\@secondoftwo}%
\providecommand \translation [1]{[#1]}%
\providecommand \BibitemOpen [0]{}%
\providecommand \bibitemStop [0]{}%
\providecommand \bibitemNoStop [0]{.\EOS\space}%
\providecommand \EOS [0]{\spacefactor3000\relax}%
\providecommand \BibitemShut  [1]{\csname bibitem#1\endcsname}%
\let\auto@bib@innerbib\@empty
%</preamble>
\bibitem [{\citenamefont {Terhal}\ \emph {et~al.}(2001)\citenamefont {Terhal}, \citenamefont {DiVincenzo},\ and\ \citenamefont {Leung}}]{Terhal01}%
  \BibitemOpen
  \bibfield  {author} {\bibinfo {author} {\bibfnamefont {B.~M.}\ \bibnamefont {Terhal}}, \bibinfo {author} {\bibfnamefont {D.~P.}\ \bibnamefont {DiVincenzo}},\ and\ \bibinfo {author} {\bibfnamefont {D.~W.}\ \bibnamefont {Leung}},\ }\bibfield  {title} {\bibinfo {title} {Hiding bits in bell states},\ }\href {https://doi.org/10.1103/PhysRevLett.86.5807} {\bibfield  {journal} {\bibinfo  {journal} {Phys. Rev. Lett.}\ }\textbf {\bibinfo {volume} {86}},\ \bibinfo {pages} {5807} (\bibinfo {year} {2001})}\BibitemShut {NoStop}%
\bibitem [{\citenamefont {Eggeling}\ and\ \citenamefont {Werner}(2002)}]{Eggeling02}%
  \BibitemOpen
  \bibfield  {author} {\bibinfo {author} {\bibfnamefont {T.}~\bibnamefont {Eggeling}}\ and\ \bibinfo {author} {\bibfnamefont {R.~F.}\ \bibnamefont {Werner}},\ }\bibfield  {title} {\bibinfo {title} {Hiding classical data in multipartite quantum states},\ }\href {https://doi.org/10.1103/PhysRevLett.89.097905} {\bibfield  {journal} {\bibinfo  {journal} {Phys. Rev. Lett.}\ }\textbf {\bibinfo {volume} {89}},\ \bibinfo {pages} {097905} (\bibinfo {year} {2002})}\BibitemShut {NoStop}%
\bibitem [{\citenamefont {Bassoli}\ \emph {et~al.}(2021)\citenamefont {Bassoli}, \citenamefont {Boche}, \citenamefont {Deppe}, \citenamefont {Ferrara}, \citenamefont {Fitzek}, \citenamefont {Janssen},\ and\ \citenamefont {Saeedinaeeni}}]{bassoli_21}%
  \BibitemOpen
  \bibfield  {author} {\bibinfo {author} {\bibfnamefont {R.}~\bibnamefont {Bassoli}}, \bibinfo {author} {\bibfnamefont {H.}~\bibnamefont {Boche}}, \bibinfo {author} {\bibfnamefont {C.}~\bibnamefont {Deppe}}, \bibinfo {author} {\bibfnamefont {R.}~\bibnamefont {Ferrara}}, \bibinfo {author} {\bibfnamefont {F.~H.}\ \bibnamefont {Fitzek}}, \bibinfo {author} {\bibfnamefont {G.}~\bibnamefont {Janssen}},\ and\ \bibinfo {author} {\bibfnamefont {S.}~\bibnamefont {Saeedinaeeni}},\ }\href@noop {} {\emph {\bibinfo {title} {Quantum communication networks}}},\ Vol.~\bibinfo {volume} {23}\ (\bibinfo  {publisher} {Springer},\ \bibinfo {year} {2021})\BibitemShut {NoStop}%
\bibitem [{\citenamefont {Lami}(2021)}]{Lami21}%
  \BibitemOpen
  \bibfield  {author} {\bibinfo {author} {\bibfnamefont {L.}~\bibnamefont {Lami}},\ }\bibfield  {title} {\bibinfo {title} {Quantum data hiding with continuous-variable systems},\ }\href {https://doi.org/10.1103/PhysRevA.104.052428} {\bibfield  {journal} {\bibinfo  {journal} {Phys. Rev. A}\ }\textbf {\bibinfo {volume} {104}},\ \bibinfo {pages} {052428} (\bibinfo {year} {2021})}\BibitemShut {NoStop}%
\bibitem [{\citenamefont {Broadbent}\ and\ \citenamefont {Schaffner}(2016)}]{broadbent_16}%
  \BibitemOpen
  \bibfield  {author} {\bibinfo {author} {\bibfnamefont {A.}~\bibnamefont {Broadbent}}\ and\ \bibinfo {author} {\bibfnamefont {C.}~\bibnamefont {Schaffner}},\ }\bibfield  {title} {\bibinfo {title} {Quantum cryptography beyond quantum key distribution},\ }\href@noop {} {\bibfield  {journal} {\bibinfo  {journal} {Designs, Codes and Cryptography}\ }\textbf {\bibinfo {volume} {78}},\ \bibinfo {pages} {351} (\bibinfo {year} {2016})}\BibitemShut {NoStop}%
\bibitem [{\citenamefont {Cacciapuoti}\ \emph {et~al.}(2019)\citenamefont {Cacciapuoti}, \citenamefont {Caleffi}, \citenamefont {Tafuri}, \citenamefont {Cataliotti}, \citenamefont {Gherardini},\ and\ \citenamefont {Bianchi}}]{cacciapuoti19}%
  \BibitemOpen
  \bibfield  {author} {\bibinfo {author} {\bibfnamefont {A.~S.}\ \bibnamefont {Cacciapuoti}}, \bibinfo {author} {\bibfnamefont {M.}~\bibnamefont {Caleffi}}, \bibinfo {author} {\bibfnamefont {F.}~\bibnamefont {Tafuri}}, \bibinfo {author} {\bibfnamefont {F.~S.}\ \bibnamefont {Cataliotti}}, \bibinfo {author} {\bibfnamefont {S.}~\bibnamefont {Gherardini}},\ and\ \bibinfo {author} {\bibfnamefont {G.}~\bibnamefont {Bianchi}},\ }\bibfield  {title} {\bibinfo {title} {Quantum internet: Networking challenges in distributed quantum computing},\ }\href@noop {} {\bibfield  {journal} {\bibinfo  {journal} {IEEE Network}\ }\textbf {\bibinfo {volume} {34}},\ \bibinfo {pages} {137} (\bibinfo {year} {2019})}\BibitemShut {NoStop}%
\bibitem [{\citenamefont {Caleffi}\ \emph {et~al.}(2018)\citenamefont {Caleffi}, \citenamefont {Cacciapuoti},\ and\ \citenamefont {Bianchi}}]{caleffi18}%
  \BibitemOpen
  \bibfield  {author} {\bibinfo {author} {\bibfnamefont {M.}~\bibnamefont {Caleffi}}, \bibinfo {author} {\bibfnamefont {A.~S.}\ \bibnamefont {Cacciapuoti}},\ and\ \bibinfo {author} {\bibfnamefont {G.}~\bibnamefont {Bianchi}},\ }\bibfield  {title} {\bibinfo {title} {Quantum internet: From communication to distributed computing!},\ }in\ \href@noop {} {\emph {\bibinfo {booktitle} {Proceedings of the 5th ACM international conference on nanoscale computing and communication}}}\ (\bibinfo {year} {2018})\ pp.\ \bibinfo {pages} {1--4}\BibitemShut {NoStop}%
\bibitem [{\citenamefont {Dynes}\ \emph {et~al.}(2019)\citenamefont {Dynes}, \citenamefont {Wonfor}, \citenamefont {Tam}, \citenamefont {Sharpe}, \citenamefont {Takahashi}, \citenamefont {Lucamarini}, \citenamefont {Plews}, \citenamefont {Yuan}, \citenamefont {Dixon}, \citenamefont {Cho} \emph {et~al.}}]{dynes19}%
  \BibitemOpen
  \bibfield  {author} {\bibinfo {author} {\bibfnamefont {J.}~\bibnamefont {Dynes}}, \bibinfo {author} {\bibfnamefont {A.}~\bibnamefont {Wonfor}}, \bibinfo {author} {\bibfnamefont {W.-S.}\ \bibnamefont {Tam}}, \bibinfo {author} {\bibfnamefont {A.}~\bibnamefont {Sharpe}}, \bibinfo {author} {\bibfnamefont {R.}~\bibnamefont {Takahashi}}, \bibinfo {author} {\bibfnamefont {M.}~\bibnamefont {Lucamarini}}, \bibinfo {author} {\bibfnamefont {A.}~\bibnamefont {Plews}}, \bibinfo {author} {\bibfnamefont {Z.}~\bibnamefont {Yuan}}, \bibinfo {author} {\bibfnamefont {A.}~\bibnamefont {Dixon}}, \bibinfo {author} {\bibfnamefont {J.}~\bibnamefont {Cho}}, \emph {et~al.},\ }\bibfield  {title} {\bibinfo {title} {Cambridge quantum network},\ }\href@noop {} {\bibfield  {journal} {\bibinfo  {journal} {npj Quantum Information}\ }\textbf {\bibinfo {volume} {5}},\ \bibinfo {pages} {101} (\bibinfo {year} {2019})}\BibitemShut {NoStop}%
\bibitem [{\citenamefont {Wehner}\ \emph {et~al.}(2018)\citenamefont {Wehner}, \citenamefont {Elkouss},\ and\ \citenamefont {Hanson}}]{wehner18}%
  \BibitemOpen
  \bibfield  {author} {\bibinfo {author} {\bibfnamefont {S.}~\bibnamefont {Wehner}}, \bibinfo {author} {\bibfnamefont {D.}~\bibnamefont {Elkouss}},\ and\ \bibinfo {author} {\bibfnamefont {R.}~\bibnamefont {Hanson}},\ }\bibfield  {title} {\bibinfo {title} {Quantum internet: A vision for the road ahead},\ }\href {https://www.science.org/doi/full/10.1126/science.aam9288} {\bibfield  {journal} {\bibinfo  {journal} {Science}\ }\textbf {\bibinfo {volume} {362}},\ \bibinfo {pages} {eaam9288} (\bibinfo {year} {2018})}\BibitemShut {NoStop}%
\bibitem [{\citenamefont {Bennett}\ and\ \citenamefont {Brassard}(1984)}]{bennett84}%
  \BibitemOpen
  \bibfield  {author} {\bibinfo {author} {\bibfnamefont {C.~H.}\ \bibnamefont {Bennett}}\ and\ \bibinfo {author} {\bibfnamefont {G.}~\bibnamefont {Brassard}},\ }\bibfield  {title} {\bibinfo {title} {Quantum cryptography: Public key distribution and coin tossing},\ }in\ \href@noop {} {\emph {\bibinfo {booktitle} {Proceedings of IEEE International Conference on Computers, Systems and Signal Processing}}}\ (\bibinfo  {publisher} {IEEE},\ \bibinfo {address} {Bangalore, India},\ \bibinfo {year} {1984})\ pp.\ \bibinfo {pages} {175--179}\BibitemShut {NoStop}%
\bibitem [{\citenamefont {Ekert}(1991)}]{ekert91}%
  \BibitemOpen
  \bibfield  {author} {\bibinfo {author} {\bibfnamefont {A.~K.}\ \bibnamefont {Ekert}},\ }\bibfield  {title} {\bibinfo {title} {Quantum cryptography based on bell's theorem},\ }\href {https://doi.org/10.1103/PhysRevLett.67.661} {\bibfield  {journal} {\bibinfo  {journal} {Phys. Rev. Lett.}\ }\textbf {\bibinfo {volume} {67}},\ \bibinfo {pages} {661} (\bibinfo {year} {1991})}\BibitemShut {NoStop}%
\bibitem [{\citenamefont {Bennett}\ and\ \citenamefont {Brassard}(2014)}]{bennett14}%
  \BibitemOpen
  \bibfield  {author} {\bibinfo {author} {\bibfnamefont {C.~H.}\ \bibnamefont {Bennett}}\ and\ \bibinfo {author} {\bibfnamefont {G.}~\bibnamefont {Brassard}},\ }\bibfield  {title} {\bibinfo {title} {Quantum cryptography: Public key distribution and coin tossing},\ }\href@noop {} {\bibfield  {journal} {\bibinfo  {journal} {Theoretical computer science}\ }\textbf {\bibinfo {volume} {560}},\ \bibinfo {pages} {7} (\bibinfo {year} {2014})}\BibitemShut {NoStop}%
\bibitem [{\citenamefont {Tomamichel}\ and\ \citenamefont {Leverrier}(2017)}]{tomamichel17}%
  \BibitemOpen
  \bibfield  {author} {\bibinfo {author} {\bibfnamefont {M.}~\bibnamefont {Tomamichel}}\ and\ \bibinfo {author} {\bibfnamefont {A.}~\bibnamefont {Leverrier}},\ }\bibfield  {title} {\bibinfo {title} {A largely self-contained and complete security proof for quantum key distribution},\ }\href@noop {} {\bibfield  {journal} {\bibinfo  {journal} {Quantum}\ }\textbf {\bibinfo {volume} {1}},\ \bibinfo {pages} {14} (\bibinfo {year} {2017})}\BibitemShut {NoStop}%
\bibitem [{\citenamefont {NSA}()}]{nsaquantum}%
  \BibitemOpen
  \bibfield  {author} {\bibinfo {author} {\bibfnamefont {U.}~\bibnamefont {NSA}},\ }\href@noop {} {\bibinfo {title} {Quantum key distribution (qkd) and quantum cryptography (qc)}}\BibitemShut {NoStop}%
\bibitem [{\citenamefont {NCSC}()}]{ncsc}%
  \BibitemOpen
  \bibfield  {author} {\bibinfo {author} {\bibfnamefont {U.}~\bibnamefont {NCSC}},\ }\href@noop {} {\bibinfo {title} {Quantum security technologies}}\BibitemShut {NoStop}%
\bibitem [{\citenamefont {Renner}\ and\ \citenamefont {Wolf}(2023)}]{renner2023debate}%
  \BibitemOpen
  \bibfield  {author} {\bibinfo {author} {\bibfnamefont {R.}~\bibnamefont {Renner}}\ and\ \bibinfo {author} {\bibfnamefont {R.}~\bibnamefont {Wolf}},\ }\bibfield  {title} {\bibinfo {title} {The debate over qkd: A rebuttal to the nsa's objections},\ }\href@noop {} {\bibfield  {journal} {\bibinfo  {journal} {arXiv preprint arXiv:2307.15116}\ } (\bibinfo {year} {2023})}\BibitemShut {NoStop}%
\bibitem [{\citenamefont {Ling}\ and\ \citenamefont {Moreno}()}]{Ling25}%
  \BibitemOpen
  \bibfield  {author} {\bibinfo {author} {\bibfnamefont {X.~F.}\ \bibnamefont {Ling}}\ and\ \bibinfo {author} {\bibfnamefont {J.}~\bibnamefont {Moreno}},\ }\bibfield  {title} {\bibinfo {title} {Quantum key distribution and authentication: Separating facts from myths},\ }\href {https://www.amazon.science/blog/qkd-and-authentication-separating-facts-from-myths} {\bibinfo  {journal} {Amazon Science}\ }\BibitemShut {NoStop}%
\bibitem [{\citenamefont {Arapinis}\ \emph {et~al.}(2021)\citenamefont {Arapinis}, \citenamefont {Delavar}, \citenamefont {Doosti},\ and\ \citenamefont {Kashefi}}]{arapinis21}%
  \BibitemOpen
\bibfield  {journal} {  }\bibfield  {author} {\bibinfo {author} {\bibfnamefont {M.}~\bibnamefont {Arapinis}}, \bibinfo {author} {\bibfnamefont {M.}~\bibnamefont {Delavar}}, \bibinfo {author} {\bibfnamefont {M.}~\bibnamefont {Doosti}},\ and\ \bibinfo {author} {\bibfnamefont {E.}~\bibnamefont {Kashefi}},\ }\bibfield  {title} {\bibinfo {title} {Quantum physical unclonable functions: Possibilities and impossibilities},\ }\href {https://quantum-journal.org/papers/q-2021-06-15-475/} {\bibfield  {journal} {\bibinfo  {journal} {Quantum}\ }\textbf {\bibinfo {volume} {5}},\ \bibinfo {pages} {475} (\bibinfo {year} {2021})}\BibitemShut {NoStop}%
\bibitem [{\citenamefont {Doosti}\ \emph {et~al.}(2021{\natexlab{a}})\citenamefont {Doosti}, \citenamefont {Kumar}, \citenamefont {Delavar},\ and\ \citenamefont {Kashefi}}]{doosti21}%
  \BibitemOpen
  \bibfield  {author} {\bibinfo {author} {\bibfnamefont {M.}~\bibnamefont {Doosti}}, \bibinfo {author} {\bibfnamefont {N.}~\bibnamefont {Kumar}}, \bibinfo {author} {\bibfnamefont {M.}~\bibnamefont {Delavar}},\ and\ \bibinfo {author} {\bibfnamefont {E.}~\bibnamefont {Kashefi}},\ }\bibfield  {title} {\bibinfo {title} {Client-server identification protocols with quantum puf},\ }\href {https://dl.acm.org/doi/abs/10.1145/3484197} {\bibfield  {journal} {\bibinfo  {journal} {ACM Transactions on Quantum Computing}\ }\textbf {\bibinfo {volume} {2}},\ \bibinfo {pages} {1} (\bibinfo {year} {2021}{\natexlab{a}})}\BibitemShut {NoStop}%
\bibitem [{\citenamefont {Chakraborty}\ \emph {et~al.}(2023)\citenamefont {Chakraborty}, \citenamefont {Doosti}, \citenamefont {Ma}, \citenamefont {Wadhwa}, \citenamefont {Arapinis},\ and\ \citenamefont {Kashefi}}]{chakraborty23}%
  \BibitemOpen
  \bibfield  {author} {\bibinfo {author} {\bibfnamefont {K.}~\bibnamefont {Chakraborty}}, \bibinfo {author} {\bibfnamefont {M.}~\bibnamefont {Doosti}}, \bibinfo {author} {\bibfnamefont {Y.}~\bibnamefont {Ma}}, \bibinfo {author} {\bibfnamefont {C.}~\bibnamefont {Wadhwa}}, \bibinfo {author} {\bibfnamefont {M.}~\bibnamefont {Arapinis}},\ and\ \bibinfo {author} {\bibfnamefont {E.}~\bibnamefont {Kashefi}},\ }\bibfield  {title} {\bibinfo {title} {Quantum lock: A provable quantum communication advantage},\ }\href {https://quantum-journal.org/papers/q-2023-05-23-1014/pdf/} {\bibfield  {journal} {\bibinfo  {journal} {Quantum}\ }\textbf {\bibinfo {volume} {7}},\ \bibinfo {pages} {1014} (\bibinfo {year} {2023})}\BibitemShut {NoStop}%
\bibitem [{\citenamefont {Gassend}\ \emph {et~al.}(2002{\natexlab{a}})\citenamefont {Gassend}, \citenamefont {Clarke}, \citenamefont {Van~Dijk},\ and\ \citenamefont {Devadas}}]{gassend02_1}%
  \BibitemOpen
  \bibfield  {author} {\bibinfo {author} {\bibfnamefont {B.}~\bibnamefont {Gassend}}, \bibinfo {author} {\bibfnamefont {D.}~\bibnamefont {Clarke}}, \bibinfo {author} {\bibfnamefont {M.}~\bibnamefont {Van~Dijk}},\ and\ \bibinfo {author} {\bibfnamefont {S.}~\bibnamefont {Devadas}},\ }\bibfield  {title} {\bibinfo {title} {Silicon physical random functions},\ }in\ \href {https://dl.acm.org/doi/abs/10.1145/586110.586132} {\emph {\bibinfo {booktitle} {Proceedings of the 9th ACM Conference on Computer and Communications Security}}}\ (\bibinfo {year} {2002})\ pp.\ \bibinfo {pages} {148--160}\BibitemShut {NoStop}%
\bibitem [{\citenamefont {Gassend}\ \emph {et~al.}(2002{\natexlab{b}})\citenamefont {Gassend}, \citenamefont {Clarke}, \citenamefont {Van~Dijk},\ and\ \citenamefont {Devadas}}]{gassend02_2}%
  \BibitemOpen
  \bibfield  {author} {\bibinfo {author} {\bibfnamefont {B.}~\bibnamefont {Gassend}}, \bibinfo {author} {\bibfnamefont {D.}~\bibnamefont {Clarke}}, \bibinfo {author} {\bibfnamefont {M.}~\bibnamefont {Van~Dijk}},\ and\ \bibinfo {author} {\bibfnamefont {S.}~\bibnamefont {Devadas}},\ }\bibfield  {title} {\bibinfo {title} {Controlled physical random functions},\ }in\ \href {https://ieeexplore.ieee.org/abstract/document/1176287} {\emph {\bibinfo {booktitle} {18th Annual Computer Security Applications Conference, 2002. Proceedings.}}}\ (\bibinfo {organization} {IEEE},\ \bibinfo {year} {2002})\ pp.\ \bibinfo {pages} {149--160}\BibitemShut {NoStop}%
\bibitem [{\citenamefont {Lee}\ \emph {et~al.}(2004)\citenamefont {Lee}, \citenamefont {Lim}, \citenamefont {Gassend}, \citenamefont {Suh}, \citenamefont {Van~Dijk},\ and\ \citenamefont {Devadas}}]{lee04}%
  \BibitemOpen
  \bibfield  {author} {\bibinfo {author} {\bibfnamefont {J.~W.}\ \bibnamefont {Lee}}, \bibinfo {author} {\bibfnamefont {D.}~\bibnamefont {Lim}}, \bibinfo {author} {\bibfnamefont {B.}~\bibnamefont {Gassend}}, \bibinfo {author} {\bibfnamefont {G.~E.}\ \bibnamefont {Suh}}, \bibinfo {author} {\bibfnamefont {M.}~\bibnamefont {Van~Dijk}},\ and\ \bibinfo {author} {\bibfnamefont {S.}~\bibnamefont {Devadas}},\ }\bibfield  {title} {\bibinfo {title} {A technique to build a secret key in integrated circuits for identification and authentication applications},\ }in\ \href {https://ieeexplore.ieee.org/abstract/document/1346548} {\emph {\bibinfo {booktitle} {2004 Symposium on VLSI Circuits. Digest of Technical Papers (IEEE Cat. No. 04CH37525)}}}\ (\bibinfo {organization} {IEEE},\ \bibinfo {year} {2004})\ pp.\ \bibinfo {pages} {176--179}\BibitemShut {NoStop}%
\bibitem [{\citenamefont {R{\"u}hrmair}\ \emph {et~al.}(2011)\citenamefont {R{\"u}hrmair}, \citenamefont {Devadas},\ and\ \citenamefont {Koushanfar}}]{ruhrmair11}%
  \BibitemOpen
  \bibfield  {author} {\bibinfo {author} {\bibfnamefont {U.}~\bibnamefont {R{\"u}hrmair}}, \bibinfo {author} {\bibfnamefont {S.}~\bibnamefont {Devadas}},\ and\ \bibinfo {author} {\bibfnamefont {F.}~\bibnamefont {Koushanfar}},\ }\bibfield  {title} {\bibinfo {title} {Security based on physical unclonability and disorder},\ }in\ \href {https://link.springer.com/chapter/10.1007/978-1-4419-8080-9_4} {\emph {\bibinfo {booktitle} {Introduction to Hardware Security and Trust}}}\ (\bibinfo  {publisher} {Springer},\ \bibinfo {year} {2011})\ pp.\ \bibinfo {pages} {65--102}\BibitemShut {NoStop}%
\bibitem [{\citenamefont {Guajardo}\ \emph {et~al.}(2007)\citenamefont {Guajardo}, \citenamefont {Kumar}, \citenamefont {Schrijen},\ and\ \citenamefont {Tuyls}}]{guajardo07}%
  \BibitemOpen
  \bibfield  {author} {\bibinfo {author} {\bibfnamefont {J.}~\bibnamefont {Guajardo}}, \bibinfo {author} {\bibfnamefont {S.~S.}\ \bibnamefont {Kumar}}, \bibinfo {author} {\bibfnamefont {G.-J.}\ \bibnamefont {Schrijen}},\ and\ \bibinfo {author} {\bibfnamefont {P.}~\bibnamefont {Tuyls}},\ }\bibfield  {title} {\bibinfo {title} {Fpga intrinsic pufs and their use for ip protection},\ }in\ \href {https://link.springer.com/chapter/10.1007/978-3-540-74735-2_5} {\emph {\bibinfo {booktitle} {Cryptographic Hardware and Embedded Systems-CHES 2007: 9th International Workshop, Vienna, Austria, September 10-13, 2007. Proceedings 9}}}\ (\bibinfo {organization} {Springer},\ \bibinfo {year} {2007})\ pp.\ \bibinfo {pages} {63--80}\BibitemShut {NoStop}%
\bibitem [{\citenamefont {Kim}\ and\ \citenamefont {Lee}(2018)}]{kim18}%
  \BibitemOpen
  \bibfield  {author} {\bibinfo {author} {\bibfnamefont {Y.}~\bibnamefont {Kim}}\ and\ \bibinfo {author} {\bibfnamefont {Y.}~\bibnamefont {Lee}},\ }\bibfield  {title} {\bibinfo {title} {Campuf: Physically unclonable function based on cmos image sensor fixed pattern noise},\ }in\ \href {https://dl.acm.org/doi/abs/10.1145/3195970.3196005} {\emph {\bibinfo {booktitle} {Proceedings of the 55th annual design automation conference}}}\ (\bibinfo {year} {2018})\ pp.\ \bibinfo {pages} {1--6}\BibitemShut {NoStop}%
\bibitem [{\citenamefont {Roel}(2012)}]{roel12}%
  \BibitemOpen
  \bibfield  {author} {\bibinfo {author} {\bibfnamefont {M.}~\bibnamefont {Roel}},\ }\bibfield  {title} {\bibinfo {title} {Physically unclonable functions: Constructions, properties and applications},\ }\href {https://citeseerx.ist.psu.edu/document?repid=rep1&type=pdf&doi=46a27511651339fd505157e016e8a55e44242b2c} {\bibfield  {journal} {\bibinfo  {journal} {Katholieke Universiteit Leuven, Belgium}\ ,\ \bibinfo {pages} {148}} (\bibinfo {year} {2012})}\BibitemShut {NoStop}%
\bibitem [{\citenamefont {Becker}(2015{\natexlab{a}})}]{becker15_1}%
  \BibitemOpen
  \bibfield  {author} {\bibinfo {author} {\bibfnamefont {G.~T.}\ \bibnamefont {Becker}},\ }\bibfield  {title} {\bibinfo {title} {On the pitfalls of using arbiter-pufs as building blocks},\ }\href {https://ieeexplore.ieee.org/abstract/document/7096998} {\bibfield  {journal} {\bibinfo  {journal} {IEEE Transactions on Computer-Aided Design of Integrated Circuits and Systems}\ }\textbf {\bibinfo {volume} {34}},\ \bibinfo {pages} {1295} (\bibinfo {year} {2015}{\natexlab{a}})}\BibitemShut {NoStop}%
\bibitem [{\citenamefont {Becker}(2015{\natexlab{b}})}]{becker15_2}%
  \BibitemOpen
  \bibfield  {author} {\bibinfo {author} {\bibfnamefont {G.~T.}\ \bibnamefont {Becker}},\ }\bibfield  {title} {\bibinfo {title} {The gap between promise and reality: On the insecurity of xor arbiter pufs},\ }in\ \href {https://link.springer.com/chapter/10.1007/978-3-662-48324-4_27} {\emph {\bibinfo {booktitle} {Cryptographic Hardware and Embedded Systems--CHES 2015: 17th International Workshop, Saint-Malo, France, September 13-16, 2015, Proceedings 17}}}\ (\bibinfo {organization} {Springer},\ \bibinfo {year} {2015})\ pp.\ \bibinfo {pages} {535--555}\BibitemShut {NoStop}%
\bibitem [{\citenamefont {Delvaux}(2019)}]{delvaux19}%
  \BibitemOpen
  \bibfield  {author} {\bibinfo {author} {\bibfnamefont {J.}~\bibnamefont {Delvaux}},\ }\bibfield  {title} {\bibinfo {title} {Machine-learning attacks on polypufs, ob-pufs, rpufs, lhs-pufs, and puf--fsms},\ }\href {https://ieeexplore.ieee.org/abstract/document/8603753} {\bibfield  {journal} {\bibinfo  {journal} {IEEE Transactions on Information Forensics and Security}\ }\textbf {\bibinfo {volume} {14}},\ \bibinfo {pages} {2043} (\bibinfo {year} {2019})}\BibitemShut {NoStop}%
\bibitem [{\citenamefont {R{\"u}hrmair}\ \emph {et~al.}(2010)\citenamefont {R{\"u}hrmair}, \citenamefont {Sehnke}, \citenamefont {S{\"o}lter}, \citenamefont {Dror}, \citenamefont {Devadas},\ and\ \citenamefont {Schmidhuber}}]{ruhrmair10}%
  \BibitemOpen
  \bibfield  {author} {\bibinfo {author} {\bibfnamefont {U.}~\bibnamefont {R{\"u}hrmair}}, \bibinfo {author} {\bibfnamefont {F.}~\bibnamefont {Sehnke}}, \bibinfo {author} {\bibfnamefont {J.}~\bibnamefont {S{\"o}lter}}, \bibinfo {author} {\bibfnamefont {G.}~\bibnamefont {Dror}}, \bibinfo {author} {\bibfnamefont {S.}~\bibnamefont {Devadas}},\ and\ \bibinfo {author} {\bibfnamefont {J.}~\bibnamefont {Schmidhuber}},\ }\bibfield  {title} {\bibinfo {title} {Modeling attacks on physical unclonable functions},\ }in\ \href {https://dl.acm.org/doi/abs/10.1145/1866307.1866335} {\emph {\bibinfo {booktitle} {Proceedings of the 17th ACM conference on Computer and communications security}}}\ (\bibinfo {year} {2010})\ pp.\ \bibinfo {pages} {237--249}\BibitemShut {NoStop}%
\bibitem [{\citenamefont {R{\"u}hrmair}\ \emph {et~al.}(2013)\citenamefont {R{\"u}hrmair}, \citenamefont {S{\"o}lter}, \citenamefont {Sehnke}, \citenamefont {Xu}, \citenamefont {Mahmoud}, \citenamefont {Stoyanova}, \citenamefont {Dror}, \citenamefont {Schmidhuber}, \citenamefont {Burleson},\ and\ \citenamefont {Devadas}}]{ruhrmair13}%
  \BibitemOpen
  \bibfield  {author} {\bibinfo {author} {\bibfnamefont {U.}~\bibnamefont {R{\"u}hrmair}}, \bibinfo {author} {\bibfnamefont {J.}~\bibnamefont {S{\"o}lter}}, \bibinfo {author} {\bibfnamefont {F.}~\bibnamefont {Sehnke}}, \bibinfo {author} {\bibfnamefont {X.}~\bibnamefont {Xu}}, \bibinfo {author} {\bibfnamefont {A.}~\bibnamefont {Mahmoud}}, \bibinfo {author} {\bibfnamefont {V.}~\bibnamefont {Stoyanova}}, \bibinfo {author} {\bibfnamefont {G.}~\bibnamefont {Dror}}, \bibinfo {author} {\bibfnamefont {J.}~\bibnamefont {Schmidhuber}}, \bibinfo {author} {\bibfnamefont {W.}~\bibnamefont {Burleson}},\ and\ \bibinfo {author} {\bibfnamefont {S.}~\bibnamefont {Devadas}},\ }\bibfield  {title} {\bibinfo {title} {Puf modeling attacks on simulated and silicon data},\ }\href {https://ieeexplore.ieee.org/abstract/document/6587277} {\bibfield  {journal} {\bibinfo  {journal} {IEEE transactions on information forensics and security}\ }\textbf {\bibinfo {volume} {8}},\ \bibinfo {pages} {1876} (\bibinfo {year} {2013})}\BibitemShut
  {NoStop}%
\bibitem [{\citenamefont {Kumar}\ \emph {et~al.}(2021)\citenamefont {Kumar}, \citenamefont {Mezher},\ and\ \citenamefont {Kashefi}}]{kumar21}%
  \BibitemOpen
  \bibfield  {author} {\bibinfo {author} {\bibfnamefont {N.}~\bibnamefont {Kumar}}, \bibinfo {author} {\bibfnamefont {R.}~\bibnamefont {Mezher}},\ and\ \bibinfo {author} {\bibfnamefont {E.}~\bibnamefont {Kashefi}},\ }\bibfield  {title} {\bibinfo {title} {Efficient construction of quantum physical unclonable functions with unitary t-designs},\ }\href {https://arxiv.org/abs/2101.05692} {\bibfield  {journal} {\bibinfo  {journal} {arXiv preprint arXiv:2101.05692}\ } (\bibinfo {year} {2021})}\BibitemShut {NoStop}%
\bibitem [{\citenamefont {Galetsky}\ \emph {et~al.}(2022)\citenamefont {Galetsky}, \citenamefont {Ghosh}, \citenamefont {Deppe},\ and\ \citenamefont {Ferrara}}]{galetsky22}%
  \BibitemOpen
  \bibfield  {author} {\bibinfo {author} {\bibfnamefont {V.}~\bibnamefont {Galetsky}}, \bibinfo {author} {\bibfnamefont {S.}~\bibnamefont {Ghosh}}, \bibinfo {author} {\bibfnamefont {C.}~\bibnamefont {Deppe}},\ and\ \bibinfo {author} {\bibfnamefont {R.}~\bibnamefont {Ferrara}},\ }\bibfield  {title} {\bibinfo {title} {Comparison of quantum puf models},\ }in\ \href {https://ieeexplore.ieee.org/abstract/document/10008722} {\emph {\bibinfo {booktitle} {2022 IEEE Globecom Workshops (GC Wkshps)}}}\ (\bibinfo {organization} {IEEE},\ \bibinfo {year} {2022})\ pp.\ \bibinfo {pages} {820--825}\BibitemShut {NoStop}%
\bibitem [{\citenamefont {Ghosh}\ \emph {et~al.}(2024)\citenamefont {Ghosh}, \citenamefont {Galetsky}, \citenamefont {Juli{\'a}~Farr{\'e}}, \citenamefont {Deppe}, \citenamefont {Ferrara},\ and\ \citenamefont {Boche}}]{ghosh2024existential}%
  \BibitemOpen
  \bibfield  {author} {\bibinfo {author} {\bibfnamefont {S.}~\bibnamefont {Ghosh}}, \bibinfo {author} {\bibfnamefont {V.}~\bibnamefont {Galetsky}}, \bibinfo {author} {\bibfnamefont {P.}~\bibnamefont {Juli{\'a}~Farr{\'e}}}, \bibinfo {author} {\bibfnamefont {C.}~\bibnamefont {Deppe}}, \bibinfo {author} {\bibfnamefont {R.}~\bibnamefont {Ferrara}},\ and\ \bibinfo {author} {\bibfnamefont {H.}~\bibnamefont {Boche}},\ }\bibfield  {title} {\bibinfo {title} {Existential unforgeability in quantum authentication from quantum physical unclonable functions based on random von neumann measurement},\ }\href@noop {} {\bibfield  {journal} {\bibinfo  {journal} {Physical Review Research}\ }\textbf {\bibinfo {volume} {6}},\ \bibinfo {pages} {043306} (\bibinfo {year} {2024})}\BibitemShut {NoStop}%
\bibitem [{\citenamefont {Davidson}\ \emph {et~al.}(2024)\citenamefont {Davidson}, \citenamefont {Hugues-Salas}, \citenamefont {Bonner}, \citenamefont {Jones}, \citenamefont {Prentice}, \citenamefont {Kariappa}, \citenamefont {Fowler}, \citenamefont {Oliveira}, \citenamefont {Zhang}, \citenamefont {Andersson} \emph {et~al.}}]{davidson2024airqkd}%
  \BibitemOpen
  \bibfield  {author} {\bibinfo {author} {\bibfnamefont {Z.~C.}\ \bibnamefont {Davidson}}, \bibinfo {author} {\bibfnamefont {E.}~\bibnamefont {Hugues-Salas}}, \bibinfo {author} {\bibfnamefont {G.~M.}\ \bibnamefont {Bonner}}, \bibinfo {author} {\bibfnamefont {B.~E.}\ \bibnamefont {Jones}}, \bibinfo {author} {\bibfnamefont {J.}~\bibnamefont {Prentice}}, \bibinfo {author} {\bibfnamefont {S.}~\bibnamefont {Kariappa}}, \bibinfo {author} {\bibfnamefont {D.~S.}\ \bibnamefont {Fowler}}, \bibinfo {author} {\bibfnamefont {R.~D.}\ \bibnamefont {Oliveira}}, \bibinfo {author} {\bibfnamefont {P.}~\bibnamefont {Zhang}}, \bibinfo {author} {\bibfnamefont {Y.}~\bibnamefont {Andersson}}, \emph {et~al.},\ }\bibfield  {title} {\bibinfo {title} {Airqkd: the role of free-space optics quantum key distribution enabling pragmatic secure and scalable communications},\ }\href@noop {} {\bibfield  {journal} {\bibinfo  {journal} {IEEE Communications Magazine}\ }\textbf {\bibinfo {volume} {62}},\ \bibinfo {pages} {40} (\bibinfo {year}
  {2024})}\BibitemShut {NoStop}%
\bibitem [{\citenamefont {Smith}\ \emph {et~al.}(2023)\citenamefont {Smith}, \citenamefont {Viszlai}, \citenamefont {Seifert}, \citenamefont {Baker}, \citenamefont {Szefer},\ and\ \citenamefont {Chong}}]{smith2023fast}%
  \BibitemOpen
  \bibfield  {author} {\bibinfo {author} {\bibfnamefont {K.~N.}\ \bibnamefont {Smith}}, \bibinfo {author} {\bibfnamefont {J.}~\bibnamefont {Viszlai}}, \bibinfo {author} {\bibfnamefont {L.~M.}\ \bibnamefont {Seifert}}, \bibinfo {author} {\bibfnamefont {J.~M.}\ \bibnamefont {Baker}}, \bibinfo {author} {\bibfnamefont {J.}~\bibnamefont {Szefer}},\ and\ \bibinfo {author} {\bibfnamefont {F.~T.}\ \bibnamefont {Chong}},\ }\bibfield  {title} {\bibinfo {title} {Fast fingerprinting of cloud-based nisq quantum computers},\ }in\ \href@noop {} {\emph {\bibinfo {booktitle} {2023 IEEE International Symposium on Hardware Oriented Security and Trust (HOST)}}}\ (\bibinfo {organization} {IEEE},\ \bibinfo {year} {2023})\ pp.\ \bibinfo {pages} {1--12}\BibitemShut {NoStop}%
\bibitem [{\citenamefont {Khan}\ \emph {et~al.}(2023)\citenamefont {Khan}, \citenamefont {Aman},\ and\ \citenamefont {Sikdar}}]{khan2023soteria}%
  \BibitemOpen
  \bibfield  {author} {\bibinfo {author} {\bibfnamefont {M.~A.}\ \bibnamefont {Khan}}, \bibinfo {author} {\bibfnamefont {M.~N.}\ \bibnamefont {Aman}},\ and\ \bibinfo {author} {\bibfnamefont {B.}~\bibnamefont {Sikdar}},\ }\bibfield  {title} {\bibinfo {title} {Soteria: A quantum-based device attestation technique for internet of things},\ }\href@noop {} {\bibfield  {journal} {\bibinfo  {journal} {IEEE Internet of Things Journal}\ }\textbf {\bibinfo {volume} {11}},\ \bibinfo {pages} {15320} (\bibinfo {year} {2023})}\BibitemShut {NoStop}%
\bibitem [{\citenamefont {Chapman}\ \emph {et~al.}(2020)\citenamefont {Chapman}, \citenamefont {Graham}, \citenamefont {Zeitler}, \citenamefont {Bernstein},\ and\ \citenamefont {Kwiat}}]{Chapman20}%
  \BibitemOpen
  \bibfield  {author} {\bibinfo {author} {\bibfnamefont {J.~C.}\ \bibnamefont {Chapman}}, \bibinfo {author} {\bibfnamefont {T.~M.}\ \bibnamefont {Graham}}, \bibinfo {author} {\bibfnamefont {C.~K.}\ \bibnamefont {Zeitler}}, \bibinfo {author} {\bibfnamefont {H.~J.}\ \bibnamefont {Bernstein}},\ and\ \bibinfo {author} {\bibfnamefont {P.~G.}\ \bibnamefont {Kwiat}},\ }\bibfield  {title} {\bibinfo {title} {Time-bin and polarization superdense teleportation for space applications},\ }\href {https://doi.org/10.1103/PhysRevApplied.14.014044} {\bibfield  {journal} {\bibinfo  {journal} {Phys. Rev. Appl.}\ }\textbf {\bibinfo {volume} {14}},\ \bibinfo {pages} {014044} (\bibinfo {year} {2020})}\BibitemShut {NoStop}%
\bibitem [{\citenamefont {Chapman}\ \emph {et~al.}(2022)\citenamefont {Chapman}, \citenamefont {Lim},\ and\ \citenamefont {Kwiat}}]{Chapman22}%
  \BibitemOpen
  \bibfield  {author} {\bibinfo {author} {\bibfnamefont {J.~C.}\ \bibnamefont {Chapman}}, \bibinfo {author} {\bibfnamefont {C.~C.}\ \bibnamefont {Lim}},\ and\ \bibinfo {author} {\bibfnamefont {P.~G.}\ \bibnamefont {Kwiat}},\ }\bibfield  {title} {\bibinfo {title} {Hyperentangled time-bin and polarization quantum key distribution},\ }\href {https://doi.org/10.1103/PhysRevApplied.18.044027} {\bibfield  {journal} {\bibinfo  {journal} {Phys. Rev. Appl.}\ }\textbf {\bibinfo {volume} {18}},\ \bibinfo {pages} {044027} (\bibinfo {year} {2022})}\BibitemShut {NoStop}%
\bibitem [{\citenamefont {Yu}\ \emph {et~al.}(2025)\citenamefont {Yu}, \citenamefont {Sciara}, \citenamefont {Chemnitz}, \citenamefont {Montaut}, \citenamefont {Crockett}, \citenamefont {Fischer}, \citenamefont {Helsten}, \citenamefont {Wetzel}, \citenamefont {Goebel}, \citenamefont {Kr{\"a}mer} \emph {et~al.}}]{yu25}%
  \BibitemOpen
  \bibfield  {author} {\bibinfo {author} {\bibfnamefont {H.}~\bibnamefont {Yu}}, \bibinfo {author} {\bibfnamefont {S.}~\bibnamefont {Sciara}}, \bibinfo {author} {\bibfnamefont {M.}~\bibnamefont {Chemnitz}}, \bibinfo {author} {\bibfnamefont {N.}~\bibnamefont {Montaut}}, \bibinfo {author} {\bibfnamefont {B.}~\bibnamefont {Crockett}}, \bibinfo {author} {\bibfnamefont {B.}~\bibnamefont {Fischer}}, \bibinfo {author} {\bibfnamefont {R.}~\bibnamefont {Helsten}}, \bibinfo {author} {\bibfnamefont {B.}~\bibnamefont {Wetzel}}, \bibinfo {author} {\bibfnamefont {T.~A.}\ \bibnamefont {Goebel}}, \bibinfo {author} {\bibfnamefont {R.~G.}\ \bibnamefont {Kr{\"a}mer}}, \emph {et~al.},\ }\bibfield  {title} {\bibinfo {title} {Quantum key distribution implemented with d-level time-bin entangled photons},\ }\href@noop {} {\bibfield  {journal} {\bibinfo  {journal} {Nature Communications}\ }\textbf {\bibinfo {volume} {16}},\ \bibinfo {pages} {171} (\bibinfo {year} {2025})}\BibitemShut {NoStop}%
\bibitem [{\citenamefont {Zhang}\ \emph {et~al.}(2024)\citenamefont {Zhang}, \citenamefont {You}, \citenamefont {Maga{\~n}a-Loaiza}, \citenamefont {Fickler}, \citenamefont {Le{\'o}n-Montiel}, \citenamefont {Torres}, \citenamefont {Humble}, \citenamefont {Liu}, \citenamefont {Xia},\ and\ \citenamefont {Zhuang}}]{zhang24}%
  \BibitemOpen
  \bibfield  {author} {\bibinfo {author} {\bibfnamefont {Z.}~\bibnamefont {Zhang}}, \bibinfo {author} {\bibfnamefont {C.}~\bibnamefont {You}}, \bibinfo {author} {\bibfnamefont {O.~S.}\ \bibnamefont {Maga{\~n}a-Loaiza}}, \bibinfo {author} {\bibfnamefont {R.}~\bibnamefont {Fickler}}, \bibinfo {author} {\bibfnamefont {R.~d.~J.}\ \bibnamefont {Le{\'o}n-Montiel}}, \bibinfo {author} {\bibfnamefont {J.~P.}\ \bibnamefont {Torres}}, \bibinfo {author} {\bibfnamefont {T.~S.}\ \bibnamefont {Humble}}, \bibinfo {author} {\bibfnamefont {S.}~\bibnamefont {Liu}}, \bibinfo {author} {\bibfnamefont {Y.}~\bibnamefont {Xia}},\ and\ \bibinfo {author} {\bibfnamefont {Q.}~\bibnamefont {Zhuang}},\ }\bibfield  {title} {\bibinfo {title} {Entanglement-based quantum information technology: a tutorial},\ }\href {https://opg.optica.org/aop/fulltext.cfm?uri=aop-16-1-60&id=548484} {\bibfield  {journal} {\bibinfo  {journal} {Advances in Optics and Photonics}\ }\textbf {\bibinfo {volume} {16}},\ \bibinfo {pages} {60} (\bibinfo {year}
  {2024})}\BibitemShut {NoStop}%
\bibitem [{\citenamefont {Cai}\ \emph {et~al.}(2017)\citenamefont {Cai}, \citenamefont {Roslund}, \citenamefont {Ferrini}, \citenamefont {Arzani}, \citenamefont {Xu}, \citenamefont {Fabre},\ and\ \citenamefont {Treps}}]{cai17}%
  \BibitemOpen
  \bibfield  {author} {\bibinfo {author} {\bibfnamefont {Y.}~\bibnamefont {Cai}}, \bibinfo {author} {\bibfnamefont {J.}~\bibnamefont {Roslund}}, \bibinfo {author} {\bibfnamefont {G.}~\bibnamefont {Ferrini}}, \bibinfo {author} {\bibfnamefont {F.}~\bibnamefont {Arzani}}, \bibinfo {author} {\bibfnamefont {X.}~\bibnamefont {Xu}}, \bibinfo {author} {\bibfnamefont {C.}~\bibnamefont {Fabre}},\ and\ \bibinfo {author} {\bibfnamefont {N.}~\bibnamefont {Treps}},\ }\bibfield  {title} {\bibinfo {title} {Multimode entanglement in reconfigurable graph states using optical frequency combs},\ }\href {https://www.nature.com/articles/ncomms15645} {\bibfield  {journal} {\bibinfo  {journal} {Nature communications}\ }\textbf {\bibinfo {volume} {8}},\ \bibinfo {pages} {15645} (\bibinfo {year} {2017})}\BibitemShut {NoStop}%
\bibitem [{\citenamefont {Kogias}\ \emph {et~al.}(2017)\citenamefont {Kogias}, \citenamefont {Xiang}, \citenamefont {He},\ and\ \citenamefont {Adesso}}]{Kogias17}%
  \BibitemOpen
  \bibfield  {author} {\bibinfo {author} {\bibfnamefont {I.}~\bibnamefont {Kogias}}, \bibinfo {author} {\bibfnamefont {Y.}~\bibnamefont {Xiang}}, \bibinfo {author} {\bibfnamefont {Q.}~\bibnamefont {He}},\ and\ \bibinfo {author} {\bibfnamefont {G.}~\bibnamefont {Adesso}},\ }\bibfield  {title} {\bibinfo {title} {Unconditional security of entanglement-based continuous-variable quantum secret sharing},\ }\href {https://doi.org/10.1103/PhysRevA.95.012315} {\bibfield  {journal} {\bibinfo  {journal} {Phys. Rev. A}\ }\textbf {\bibinfo {volume} {95}},\ \bibinfo {pages} {012315} (\bibinfo {year} {2017})}\BibitemShut {NoStop}%
\bibitem [{\citenamefont {Zhou}\ \emph {et~al.}(2018)\citenamefont {Zhou}, \citenamefont {Yu}, \citenamefont {Yan}, \citenamefont {Jia}, \citenamefont {Zhang}, \citenamefont {Xie},\ and\ \citenamefont {Peng}}]{Zhou18}%
  \BibitemOpen
  \bibfield  {author} {\bibinfo {author} {\bibfnamefont {Y.}~\bibnamefont {Zhou}}, \bibinfo {author} {\bibfnamefont {J.}~\bibnamefont {Yu}}, \bibinfo {author} {\bibfnamefont {Z.}~\bibnamefont {Yan}}, \bibinfo {author} {\bibfnamefont {X.}~\bibnamefont {Jia}}, \bibinfo {author} {\bibfnamefont {J.}~\bibnamefont {Zhang}}, \bibinfo {author} {\bibfnamefont {C.}~\bibnamefont {Xie}},\ and\ \bibinfo {author} {\bibfnamefont {K.}~\bibnamefont {Peng}},\ }\bibfield  {title} {\bibinfo {title} {Quantum secret sharing among four players using multipartite bound entanglement of an optical field},\ }\href {https://doi.org/10.1103/PhysRevLett.121.150502} {\bibfield  {journal} {\bibinfo  {journal} {Phys. Rev. Lett.}\ }\textbf {\bibinfo {volume} {121}},\ \bibinfo {pages} {150502} (\bibinfo {year} {2018})}\BibitemShut {NoStop}%
\bibitem [{\citenamefont {Keet}\ \emph {et~al.}(2010)\citenamefont {Keet}, \citenamefont {Fortescue}, \citenamefont {Markham},\ and\ \citenamefont {Sanders}}]{Keet10}%
  \BibitemOpen
  \bibfield  {author} {\bibinfo {author} {\bibfnamefont {A.}~\bibnamefont {Keet}}, \bibinfo {author} {\bibfnamefont {B.}~\bibnamefont {Fortescue}}, \bibinfo {author} {\bibfnamefont {D.}~\bibnamefont {Markham}},\ and\ \bibinfo {author} {\bibfnamefont {B.~C.}\ \bibnamefont {Sanders}},\ }\bibfield  {title} {\bibinfo {title} {Quantum secret sharing with qudit graph states},\ }\href {https://doi.org/10.1103/PhysRevA.82.062315} {\bibfield  {journal} {\bibinfo  {journal} {Phys. Rev. A}\ }\textbf {\bibinfo {volume} {82}},\ \bibinfo {pages} {062315} (\bibinfo {year} {2010})}\BibitemShut {NoStop}%
\bibitem [{\citenamefont {Tagliavacche}\ \emph {et~al.}(2024)\citenamefont {Tagliavacche}, \citenamefont {Borghi}, \citenamefont {Guarda}, \citenamefont {Ribezzo}, \citenamefont {Liscidini}, \citenamefont {Bacco}, \citenamefont {Galli},\ and\ \citenamefont {Bajoni}}]{tagliavacche24}%
  \BibitemOpen
  \bibfield  {author} {\bibinfo {author} {\bibfnamefont {N.}~\bibnamefont {Tagliavacche}}, \bibinfo {author} {\bibfnamefont {M.}~\bibnamefont {Borghi}}, \bibinfo {author} {\bibfnamefont {G.}~\bibnamefont {Guarda}}, \bibinfo {author} {\bibfnamefont {D.}~\bibnamefont {Ribezzo}}, \bibinfo {author} {\bibfnamefont {M.}~\bibnamefont {Liscidini}}, \bibinfo {author} {\bibfnamefont {D.}~\bibnamefont {Bacco}}, \bibinfo {author} {\bibfnamefont {M.}~\bibnamefont {Galli}},\ and\ \bibinfo {author} {\bibfnamefont {D.}~\bibnamefont {Bajoni}},\ }\bibfield  {title} {\bibinfo {title} {Frequency-bin entanglement-based quantum key distribution},\ }\href {https://arxiv.org/abs/2411.07884} {\bibfield  {journal} {\bibinfo  {journal} {arXiv preprint arXiv:2411.07884}\ } (\bibinfo {year} {2024})}\BibitemShut {NoStop}%
\bibitem [{Note1()}]{Note1}%
  \BibitemOpen
  \bibinfo {note} {Hybrid protocols are sometimes used to refer to schemes that combine post-quantum assumptions with quantum communication, or even in other contexts, which is not the intended meaning here.}\BibitemShut {Stop}%
\bibitem [{\citenamefont {Bandyopadhyay}\ and\ \citenamefont {Halder}(2021)}]{Bandyopadhyay21}%
  \BibitemOpen
  \bibfield  {author} {\bibinfo {author} {\bibfnamefont {S.}~\bibnamefont {Bandyopadhyay}}\ and\ \bibinfo {author} {\bibfnamefont {S.}~\bibnamefont {Halder}},\ }\bibfield  {title} {\bibinfo {title} {Genuine activation of nonlocality: From locally available to locally hidden information},\ }\href {https://doi.org/10.1103/PhysRevA.104.L050201} {\bibfield  {journal} {\bibinfo  {journal} {Phys. Rev. A}\ }\textbf {\bibinfo {volume} {104}},\ \bibinfo {pages} {L050201} (\bibinfo {year} {2021})}\BibitemShut {NoStop}%
\bibitem [{\citenamefont {Markham}\ and\ \citenamefont {Sanders}(2008)}]{Markham08}%
  \BibitemOpen
  \bibfield  {author} {\bibinfo {author} {\bibfnamefont {D.}~\bibnamefont {Markham}}\ and\ \bibinfo {author} {\bibfnamefont {B.~C.}\ \bibnamefont {Sanders}},\ }\bibfield  {title} {\bibinfo {title} {Graph states for quantum secret sharing},\ }\href {https://doi.org/10.1103/PhysRevA.78.042309} {\bibfield  {journal} {\bibinfo  {journal} {Phys. Rev. A}\ }\textbf {\bibinfo {volume} {78}},\ \bibinfo {pages} {042309} (\bibinfo {year} {2008})}\BibitemShut {NoStop}%
\bibitem [{\citenamefont {Rahaman}\ and\ \citenamefont {Parker}(2015)}]{Rahaman15}%
  \BibitemOpen
  \bibfield  {author} {\bibinfo {author} {\bibfnamefont {R.}~\bibnamefont {Rahaman}}\ and\ \bibinfo {author} {\bibfnamefont {M.~G.}\ \bibnamefont {Parker}},\ }\bibfield  {title} {\bibinfo {title} {Quantum scheme for secret sharing based on local distinguishability},\ }\href {https://doi.org/10.1103/PhysRevA.91.022330} {\bibfield  {journal} {\bibinfo  {journal} {Phys. Rev. A}\ }\textbf {\bibinfo {volume} {91}},\ \bibinfo {pages} {022330} (\bibinfo {year} {2015})}\BibitemShut {NoStop}%
\bibitem [{\citenamefont {Bennett}\ \emph {et~al.}(2000)\citenamefont {Bennett}, \citenamefont {Popescu}, \citenamefont {Rohrlich}, \citenamefont {Smolin},\ and\ \citenamefont {Thapliyal}}]{Bennett00}%
  \BibitemOpen
  \bibfield  {author} {\bibinfo {author} {\bibfnamefont {C.~H.}\ \bibnamefont {Bennett}}, \bibinfo {author} {\bibfnamefont {S.}~\bibnamefont {Popescu}}, \bibinfo {author} {\bibfnamefont {D.}~\bibnamefont {Rohrlich}}, \bibinfo {author} {\bibfnamefont {J.~A.}\ \bibnamefont {Smolin}},\ and\ \bibinfo {author} {\bibfnamefont {A.~V.}\ \bibnamefont {Thapliyal}},\ }\bibfield  {title} {\bibinfo {title} {Exact and asymptotic measures of multipartite pure-state entanglement},\ }\href {https://doi.org/10.1103/PhysRevA.63.012307} {\bibfield  {journal} {\bibinfo  {journal} {Phys. Rev. A}\ }\textbf {\bibinfo {volume} {63}},\ \bibinfo {pages} {012307} (\bibinfo {year} {2000})}\BibitemShut {NoStop}%
\bibitem [{\citenamefont {Ghosh}\ \emph {et~al.}(2001)\citenamefont {Ghosh}, \citenamefont {Kar}, \citenamefont {Roy}, \citenamefont {Sen(De)},\ and\ \citenamefont {Sen}}]{Ghosh01}%
  \BibitemOpen
  \bibfield  {author} {\bibinfo {author} {\bibfnamefont {S.}~\bibnamefont {Ghosh}}, \bibinfo {author} {\bibfnamefont {G.}~\bibnamefont {Kar}}, \bibinfo {author} {\bibfnamefont {A.}~\bibnamefont {Roy}}, \bibinfo {author} {\bibfnamefont {A.}~\bibnamefont {Sen(De)}},\ and\ \bibinfo {author} {\bibfnamefont {U.}~\bibnamefont {Sen}},\ }\bibfield  {title} {\bibinfo {title} {Distinguishability of bell states},\ }\href {https://doi.org/10.1103/PhysRevLett.87.277902} {\bibfield  {journal} {\bibinfo  {journal} {Phys. Rev. Lett.}\ }\textbf {\bibinfo {volume} {87}},\ \bibinfo {pages} {277902} (\bibinfo {year} {2001})}\BibitemShut {NoStop}%
\bibitem [{\citenamefont {Walgate}\ and\ \citenamefont {Hardy}(2002)}]{Walgate02}%
  \BibitemOpen
  \bibfield  {author} {\bibinfo {author} {\bibfnamefont {J.}~\bibnamefont {Walgate}}\ and\ \bibinfo {author} {\bibfnamefont {L.}~\bibnamefont {Hardy}},\ }\bibfield  {title} {\bibinfo {title} {Nonlocality, asymmetry, and distinguishing bipartite states},\ }\href {https://doi.org/10.1103/PhysRevLett.89.147901} {\bibfield  {journal} {\bibinfo  {journal} {Phys. Rev. Lett.}\ }\textbf {\bibinfo {volume} {89}},\ \bibinfo {pages} {147901} (\bibinfo {year} {2002})}\BibitemShut {NoStop}%
\bibitem [{\citenamefont {Ghosh}\ \emph {et~al.}(2002)\citenamefont {Ghosh}, \citenamefont {Kar}, \citenamefont {Roy}, \citenamefont {Sarkar}, \citenamefont {Sen(De)},\ and\ \citenamefont {Sen}}]{Ghosh02}%
  \BibitemOpen
  \bibfield  {author} {\bibinfo {author} {\bibfnamefont {S.}~\bibnamefont {Ghosh}}, \bibinfo {author} {\bibfnamefont {G.}~\bibnamefont {Kar}}, \bibinfo {author} {\bibfnamefont {A.}~\bibnamefont {Roy}}, \bibinfo {author} {\bibfnamefont {D.}~\bibnamefont {Sarkar}}, \bibinfo {author} {\bibfnamefont {A.}~\bibnamefont {Sen(De)}},\ and\ \bibinfo {author} {\bibfnamefont {U.}~\bibnamefont {Sen}},\ }\bibfield  {title} {\bibinfo {title} {Local indistinguishability of orthogonal pure states by using a bound on distillable entanglement},\ }\href {https://doi.org/10.1103/PhysRevA.65.062307} {\bibfield  {journal} {\bibinfo  {journal} {Phys. Rev. A}\ }\textbf {\bibinfo {volume} {65}},\ \bibinfo {pages} {062307} (\bibinfo {year} {2002})}\BibitemShut {NoStop}%
\bibitem [{\citenamefont {Horodecki}\ \emph {et~al.}(2003)\citenamefont {Horodecki}, \citenamefont {Sen(De)}, \citenamefont {Sen},\ and\ \citenamefont {Horodecki}}]{Horodecki03}%
  \BibitemOpen
  \bibfield  {author} {\bibinfo {author} {\bibfnamefont {M.}~\bibnamefont {Horodecki}}, \bibinfo {author} {\bibfnamefont {A.}~\bibnamefont {Sen(De)}}, \bibinfo {author} {\bibfnamefont {U.}~\bibnamefont {Sen}},\ and\ \bibinfo {author} {\bibfnamefont {K.}~\bibnamefont {Horodecki}},\ }\bibfield  {title} {\bibinfo {title} {Local indistinguishability: More nonlocality with less entanglement},\ }\href {https://doi.org/10.1103/PhysRevLett.90.047902} {\bibfield  {journal} {\bibinfo  {journal} {Phys. Rev. Lett.}\ }\textbf {\bibinfo {volume} {90}},\ \bibinfo {pages} {047902} (\bibinfo {year} {2003})}\BibitemShut {NoStop}%
\bibitem [{\citenamefont {Fan}(2004)}]{Fan04}%
  \BibitemOpen
  \bibfield  {author} {\bibinfo {author} {\bibfnamefont {H.}~\bibnamefont {Fan}},\ }\bibfield  {title} {\bibinfo {title} {Distinguishability and indistinguishability by local operations and classical communication},\ }\href {https://doi.org/10.1103/PhysRevLett.92.177905} {\bibfield  {journal} {\bibinfo  {journal} {Phys. Rev. Lett.}\ }\textbf {\bibinfo {volume} {92}},\ \bibinfo {pages} {177905} (\bibinfo {year} {2004})}\BibitemShut {NoStop}%
\bibitem [{\citenamefont {Ghosh}\ \emph {et~al.}(2004)\citenamefont {Ghosh}, \citenamefont {Kar}, \citenamefont {Roy},\ and\ \citenamefont {Sarkar}}]{Ghosh04}%
  \BibitemOpen
  \bibfield  {author} {\bibinfo {author} {\bibfnamefont {S.}~\bibnamefont {Ghosh}}, \bibinfo {author} {\bibfnamefont {G.}~\bibnamefont {Kar}}, \bibinfo {author} {\bibfnamefont {A.}~\bibnamefont {Roy}},\ and\ \bibinfo {author} {\bibfnamefont {D.}~\bibnamefont {Sarkar}},\ }\bibfield  {title} {\bibinfo {title} {Distinguishability of maximally entangled states},\ }\href {https://doi.org/10.1103/PhysRevA.70.022304} {\bibfield  {journal} {\bibinfo  {journal} {Phys. Rev. A}\ }\textbf {\bibinfo {volume} {70}},\ \bibinfo {pages} {022304} (\bibinfo {year} {2004})}\BibitemShut {NoStop}%
\bibitem [{\citenamefont {Watrous}(2005)}]{Watrous05}%
  \BibitemOpen
  \bibfield  {author} {\bibinfo {author} {\bibfnamefont {J.}~\bibnamefont {Watrous}},\ }\bibfield  {title} {\bibinfo {title} {Bipartite subspaces having no bases distinguishable by local operations and classical communication},\ }\href {https://doi.org/10.1103/PhysRevLett.95.080505} {\bibfield  {journal} {\bibinfo  {journal} {Phys. Rev. Lett.}\ }\textbf {\bibinfo {volume} {95}},\ \bibinfo {pages} {080505} (\bibinfo {year} {2005})}\BibitemShut {NoStop}%
\bibitem [{\citenamefont {Hayashi}\ \emph {et~al.}(2006)\citenamefont {Hayashi}, \citenamefont {Markham}, \citenamefont {Murao}, \citenamefont {Owari},\ and\ \citenamefont {Virmani}}]{Hayashi06}%
  \BibitemOpen
  \bibfield  {author} {\bibinfo {author} {\bibfnamefont {M.}~\bibnamefont {Hayashi}}, \bibinfo {author} {\bibfnamefont {D.}~\bibnamefont {Markham}}, \bibinfo {author} {\bibfnamefont {M.}~\bibnamefont {Murao}}, \bibinfo {author} {\bibfnamefont {M.}~\bibnamefont {Owari}},\ and\ \bibinfo {author} {\bibfnamefont {S.}~\bibnamefont {Virmani}},\ }\bibfield  {title} {\bibinfo {title} {Bounds on multipartite entangled orthogonal state discrimination using local operations and classical communication},\ }\href {https://doi.org/10.1103/PhysRevLett.96.040501} {\bibfield  {journal} {\bibinfo  {journal} {Phys. Rev. Lett.}\ }\textbf {\bibinfo {volume} {96}},\ \bibinfo {pages} {040501} (\bibinfo {year} {2006})}\BibitemShut {NoStop}%
\bibitem [{\citenamefont {Bandyopadhyay}(2011)}]{Bandyopadhyay11}%
  \BibitemOpen
  \bibfield  {author} {\bibinfo {author} {\bibfnamefont {S.}~\bibnamefont {Bandyopadhyay}},\ }\bibfield  {title} {\bibinfo {title} {More nonlocality with less purity},\ }\href {https://doi.org/10.1103/PhysRevLett.106.210402} {\bibfield  {journal} {\bibinfo  {journal} {Phys. Rev. Lett.}\ }\textbf {\bibinfo {volume} {106}},\ \bibinfo {pages} {210402} (\bibinfo {year} {2011})}\BibitemShut {NoStop}%
\bibitem [{\citenamefont {Yu}\ \emph {et~al.}(2012)\citenamefont {Yu}, \citenamefont {Duan},\ and\ \citenamefont {Ying}}]{Yu12}%
  \BibitemOpen
  \bibfield  {author} {\bibinfo {author} {\bibfnamefont {N.}~\bibnamefont {Yu}}, \bibinfo {author} {\bibfnamefont {R.}~\bibnamefont {Duan}},\ and\ \bibinfo {author} {\bibfnamefont {M.}~\bibnamefont {Ying}},\ }\bibfield  {title} {\bibinfo {title} {Four locally indistinguishable ququad-ququad orthogonal maximally entangled states},\ }\href {https://doi.org/10.1103/PhysRevLett.109.020506} {\bibfield  {journal} {\bibinfo  {journal} {Phys. Rev. Lett.}\ }\textbf {\bibinfo {volume} {109}},\ \bibinfo {pages} {020506} (\bibinfo {year} {2012})}\BibitemShut {NoStop}%
\bibitem [{\citenamefont {Halder}(2018)}]{Halder18}%
  \BibitemOpen
  \bibfield  {author} {\bibinfo {author} {\bibfnamefont {S.}~\bibnamefont {Halder}},\ }\bibfield  {title} {\bibinfo {title} {Several nonlocal sets of multipartite pure orthogonal product states},\ }\href {https://doi.org/10.1103/PhysRevA.98.022303} {\bibfield  {journal} {\bibinfo  {journal} {Phys. Rev. A}\ }\textbf {\bibinfo {volume} {98}},\ \bibinfo {pages} {022303} (\bibinfo {year} {2018})}\BibitemShut {NoStop}%
\bibitem [{\citenamefont {Halder}\ \emph {et~al.}(2019)\citenamefont {Halder}, \citenamefont {Banik}, \citenamefont {Agrawal},\ and\ \citenamefont {Bandyopadhyay}}]{Halder19}%
  \BibitemOpen
  \bibfield  {author} {\bibinfo {author} {\bibfnamefont {S.}~\bibnamefont {Halder}}, \bibinfo {author} {\bibfnamefont {M.}~\bibnamefont {Banik}}, \bibinfo {author} {\bibfnamefont {S.}~\bibnamefont {Agrawal}},\ and\ \bibinfo {author} {\bibfnamefont {S.}~\bibnamefont {Bandyopadhyay}},\ }\bibfield  {title} {\bibinfo {title} {Strong quantum nonlocality without entanglement},\ }\href {https://doi.org/10.1103/PhysRevLett.122.040403} {\bibfield  {journal} {\bibinfo  {journal} {Phys. Rev. Lett.}\ }\textbf {\bibinfo {volume} {122}},\ \bibinfo {pages} {040403} (\bibinfo {year} {2019})}\BibitemShut {NoStop}%
\bibitem [{\citenamefont {Goswami}\ and\ \citenamefont {Halder}(2023)}]{Goswami23}%
  \BibitemOpen
  \bibfield  {author} {\bibinfo {author} {\bibfnamefont {S.}~\bibnamefont {Goswami}}\ and\ \bibinfo {author} {\bibfnamefont {S.}~\bibnamefont {Halder}},\ }\bibfield  {title} {\bibinfo {title} {Information locking and its resource-efficient extraction},\ }\href {https://doi.org/10.1103/PhysRevA.108.012405} {\bibfield  {journal} {\bibinfo  {journal} {Phys. Rev. A}\ }\textbf {\bibinfo {volume} {108}},\ \bibinfo {pages} {012405} (\bibinfo {year} {2023})}\BibitemShut {NoStop}%
\bibitem [{\citenamefont {Walgate}\ \emph {et~al.}(2000)\citenamefont {Walgate}, \citenamefont {Short}, \citenamefont {Hardy},\ and\ \citenamefont {Vedral}}]{Walgate00}%
  \BibitemOpen
  \bibfield  {author} {\bibinfo {author} {\bibfnamefont {J.}~\bibnamefont {Walgate}}, \bibinfo {author} {\bibfnamefont {A.~J.}\ \bibnamefont {Short}}, \bibinfo {author} {\bibfnamefont {L.}~\bibnamefont {Hardy}},\ and\ \bibinfo {author} {\bibfnamefont {V.}~\bibnamefont {Vedral}},\ }\bibfield  {title} {\bibinfo {title} {Local distinguishability of multipartite orthogonal quantum states},\ }\href {https://doi.org/10.1103/PhysRevLett.85.4972} {\bibfield  {journal} {\bibinfo  {journal} {Phys. Rev. Lett.}\ }\textbf {\bibinfo {volume} {85}},\ \bibinfo {pages} {4972} (\bibinfo {year} {2000})}\BibitemShut {NoStop}%
\bibitem [{\citenamefont {Doosti}\ \emph {et~al.}(2021{\natexlab{b}})\citenamefont {Doosti}, \citenamefont {Delavar}, \citenamefont {Kashefi},\ and\ \citenamefont {Arapinis}}]{doosti2021unified}%
  \BibitemOpen
  \bibfield  {author} {\bibinfo {author} {\bibfnamefont {M.}~\bibnamefont {Doosti}}, \bibinfo {author} {\bibfnamefont {M.}~\bibnamefont {Delavar}}, \bibinfo {author} {\bibfnamefont {E.}~\bibnamefont {Kashefi}},\ and\ \bibinfo {author} {\bibfnamefont {M.}~\bibnamefont {Arapinis}},\ }\bibfield  {title} {\bibinfo {title} {A unified framework for quantum unforgeability},\ }\href@noop {} {\bibfield  {journal} {\bibinfo  {journal} {arXiv preprint arXiv:2103.13994}\ } (\bibinfo {year} {2021}{\natexlab{b}})}\BibitemShut {NoStop}%
\bibitem [{\citenamefont {Alagic}\ \emph {et~al.}(2020)\citenamefont {Alagic}, \citenamefont {Majenz}, \citenamefont {Russell},\ and\ \citenamefont {Song}}]{alagic2020quantum}%
  \BibitemOpen
  \bibfield  {author} {\bibinfo {author} {\bibfnamefont {G.}~\bibnamefont {Alagic}}, \bibinfo {author} {\bibfnamefont {C.}~\bibnamefont {Majenz}}, \bibinfo {author} {\bibfnamefont {A.}~\bibnamefont {Russell}},\ and\ \bibinfo {author} {\bibfnamefont {F.}~\bibnamefont {Song}},\ }\bibfield  {title} {\bibinfo {title} {Quantum-access-secure message authentication via blind-unforgeability},\ }in\ \href@noop {} {\emph {\bibinfo {booktitle} {Advances in Cryptology--EUROCRYPT 2020: 39th Annual International Conference on the Theory and Applications of Cryptographic Techniques, Zagreb, Croatia, May 10--14, 2020, Proceedings, Part III 39}}}\ (\bibinfo {organization} {Springer},\ \bibinfo {year} {2020})\ pp.\ \bibinfo {pages} {788--817}\BibitemShut {NoStop}%
\bibitem [{\citenamefont {Doosti}(2022)}]{doosti2022unclonability}%
  \BibitemOpen
  \bibfield  {author} {\bibinfo {author} {\bibfnamefont {M.}~\bibnamefont {Doosti}},\ }\bibfield  {title} {\bibinfo {title} {Unclonability and quantum cryptanalysis: from foundations to applications},\ }\href@noop {} {\bibfield  {journal} {\bibinfo  {journal} {arXiv preprint arXiv:2210.17545}\ } (\bibinfo {year} {2022})}\BibitemShut {NoStop}%
\bibitem [{\citenamefont {Armknecht}\ \emph {et~al.}(2016)\citenamefont {Armknecht}, \citenamefont {Moriyama}, \citenamefont {Sadeghi},\ and\ \citenamefont {Yung}}]{armknecht16}%
  \BibitemOpen
  \bibfield  {author} {\bibinfo {author} {\bibfnamefont {F.}~\bibnamefont {Armknecht}}, \bibinfo {author} {\bibfnamefont {D.}~\bibnamefont {Moriyama}}, \bibinfo {author} {\bibfnamefont {A.-R.}\ \bibnamefont {Sadeghi}},\ and\ \bibinfo {author} {\bibfnamefont {M.}~\bibnamefont {Yung}},\ }\bibfield  {title} {\bibinfo {title} {Towards a unified security model for physically unclonable functions},\ }in\ \href@noop {} {\emph {\bibinfo {booktitle} {Topics in Cryptology-CT-RSA 2016: The Cryptographers' Track at the RSA Conference 2016, San Francisco, CA, USA, February 29-March 4, 2016, Proceedings}}}\ (\bibinfo {organization} {Springer},\ \bibinfo {year} {2016})\ pp.\ \bibinfo {pages} {271--287}\BibitemShut {NoStop}%
\bibitem [{\citenamefont {Fei}\ \emph {et~al.}(2018)\citenamefont {Fei}, \citenamefont {Meng}, \citenamefont {Gao}, \citenamefont {Wang},\ and\ \citenamefont {Ma}}]{fei2018quantum}%
  \BibitemOpen
  \bibfield  {author} {\bibinfo {author} {\bibfnamefont {Y.-Y.}\ \bibnamefont {Fei}}, \bibinfo {author} {\bibfnamefont {X.-D.}\ \bibnamefont {Meng}}, \bibinfo {author} {\bibfnamefont {M.}~\bibnamefont {Gao}}, \bibinfo {author} {\bibfnamefont {H.}~\bibnamefont {Wang}},\ and\ \bibinfo {author} {\bibfnamefont {Z.}~\bibnamefont {Ma}},\ }\bibfield  {title} {\bibinfo {title} {Quantum man-in-the-middle attack on the calibration process of quantum key distribution},\ }\href@noop {} {\bibfield  {journal} {\bibinfo  {journal} {Scientific reports}\ }\textbf {\bibinfo {volume} {8}},\ \bibinfo {pages} {4283} (\bibinfo {year} {2018})}\BibitemShut {NoStop}%
\bibitem [{\citenamefont {Boneh}\ and\ \citenamefont {Zhandry}(2013)}]{boneh2013quantum}%
  \BibitemOpen
  \bibfield  {author} {\bibinfo {author} {\bibfnamefont {D.}~\bibnamefont {Boneh}}\ and\ \bibinfo {author} {\bibfnamefont {M.}~\bibnamefont {Zhandry}},\ }\bibfield  {title} {\bibinfo {title} {Quantum-secure message authentication codes},\ }in\ \href@noop {} {\emph {\bibinfo {booktitle} {Advances in Cryptology--EUROCRYPT 2013: 32nd Annual International Conference on the Theory and Applications of Cryptographic Techniques, Athens, Greece, May 26-30, 2013. Proceedings 32}}}\ (\bibinfo {organization} {Springer},\ \bibinfo {year} {2013})\ pp.\ \bibinfo {pages} {592--608}\BibitemShut {NoStop}%
\bibitem [{Note2()}]{Note2}%
  \BibitemOpen
  \bibinfo {note} {In most cases $y^1$ and $y^2$ have $[\protect \frac {m}{2}]$ bits each}\BibitemShut {NoStop}%
\bibitem [{\citenamefont {Berta}\ \emph {et~al.}(2010)\citenamefont {Berta}, \citenamefont {Christandl}, \citenamefont {Colbeck}, \citenamefont {Renes},\ and\ \citenamefont {Renner}}]{berta2010uncertainty}%
  \BibitemOpen
  \bibfield  {author} {\bibinfo {author} {\bibfnamefont {M.}~\bibnamefont {Berta}}, \bibinfo {author} {\bibfnamefont {M.}~\bibnamefont {Christandl}}, \bibinfo {author} {\bibfnamefont {R.}~\bibnamefont {Colbeck}}, \bibinfo {author} {\bibfnamefont {J.~M.}\ \bibnamefont {Renes}},\ and\ \bibinfo {author} {\bibfnamefont {R.}~\bibnamefont {Renner}},\ }\bibfield  {title} {\bibinfo {title} {The uncertainty principle in the presence of quantum memory},\ }\href@noop {} {\bibfield  {journal} {\bibinfo  {journal} {Nature Physics}\ }\textbf {\bibinfo {volume} {6}},\ \bibinfo {pages} {659} (\bibinfo {year} {2010})}\BibitemShut {NoStop}%
\bibitem [{\citenamefont {Winter}(2016)}]{winter2016tight}%
  \BibitemOpen
  \bibfield  {author} {\bibinfo {author} {\bibfnamefont {A.}~\bibnamefont {Winter}},\ }\bibfield  {title} {\bibinfo {title} {Tight uniform continuity bounds for quantum entropies: conditional entropy, relative entropy distance and energy constraints},\ }\href@noop {} {\bibfield  {journal} {\bibinfo  {journal} {Communications in Mathematical Physics}\ }\textbf {\bibinfo {volume} {347}},\ \bibinfo {pages} {291} (\bibinfo {year} {2016})}\BibitemShut {NoStop}%
\bibitem [{\citenamefont {Helstrom}(1969)}]{helstrom69}%
  \BibitemOpen
  \bibfield  {author} {\bibinfo {author} {\bibfnamefont {C.~W.}\ \bibnamefont {Helstrom}},\ }\bibfield  {title} {\bibinfo {title} {Quantum detection and estimation theory},\ }\href {https://search.worldcat.org/title/316552953} {\bibfield  {journal} {\bibinfo  {journal} {Journal of Statistical Physics}\ }\textbf {\bibinfo {volume} {1}},\ \bibinfo {pages} {231} (\bibinfo {year} {1969})}\BibitemShut {NoStop}%
\bibitem [{\citenamefont {Ivanovic}(1987)}]{ivanovic87}%
  \BibitemOpen
  \bibfield  {author} {\bibinfo {author} {\bibfnamefont {I.~D.}\ \bibnamefont {Ivanovic}},\ }\bibfield  {title} {\bibinfo {title} {How to differentiate between non-orthogonal states},\ }\href {https://www.sciencedirect.com/science/article/pii/0375960187902222?via%3Dihub} {\bibfield  {journal} {\bibinfo  {journal} {Physics Letters A}\ }\textbf {\bibinfo {volume} {123}},\ \bibinfo {pages} {257} (\bibinfo {year} {1987})}\BibitemShut {NoStop}%
\bibitem [{\citenamefont {Peres}\ and\ \citenamefont {Terno}(1998)}]{peres98}%
  \BibitemOpen
  \bibfield  {author} {\bibinfo {author} {\bibfnamefont {A.}~\bibnamefont {Peres}}\ and\ \bibinfo {author} {\bibfnamefont {D.~R.}\ \bibnamefont {Terno}},\ }\bibfield  {title} {\bibinfo {title} {Optimal distinction between non-orthogonal quantum states},\ }\href {https://iopscience.iop.org/article/10.1088/0305-4470/31/34/013} {\bibfield  {journal} {\bibinfo  {journal} {Journal of Physics A: Mathematical and General}\ }\textbf {\bibinfo {volume} {31}},\ \bibinfo {pages} {7105} (\bibinfo {year} {1998})}\BibitemShut {NoStop}%
\end{thebibliography}%

\end{document}